\definecolor{grey}{gray}{0.90}
\newcommand*{\N}{\mathbb{N}} 
\newcommand*{\R}{\mathbb{R}} 
\renewcommand*{\d}{\mathrm{d}} 
\newcommand*{\e}{\mathrm{e}} 
\renewcommand*{\epsilon}{\varepsilon} 
\newcommand*{\abs}[1]{\left|{#1}\right|} 
\newcommand*{\pnorm}[1]{\left\|{#1}\right\|} 
\newcommand*{\pinner}[2]{\langle{#1},{#2}\rangle} 
\newcommand{\id}{1\hspace{-0,9ex}1} 
\newcommand{\hy}{\ensuremath{\hat{Y}}}
\newcommand{\tw}{\ensuremath{\tilde{W}}} 
\newcommand{\ts}{\ensuremath{\tilde{\sigma}}} 
\DeclareMathOperator{\E}{\mathbb{E}} 
\DeclareMathOperator{\diag}{diag}
\renewcommand*{\P}{\mathbb{P}} 
\newcommand*{\condE}[2]{{\E  \left[{#1 \left|\right. #2}\right]}} 
\newcommand*{\norm}[1]{\|{#1}\|} 
\newcommand{\msm}{\mathrm{MSM}}
\newcommand{\hmm}{\mathrm{HMM}}
\newcommand{\sv}{\mathrm{SV}}
\newcommand{\F}{\mathcal{F}}
\newcommand{\G}{\mathcal{G}}
\newtheoremstyle{noindent} 
                        {12pt}    
                        {12pt}    
                        {\itshape}         
                        {}         
                        {\scshape}
                        {.}        
                        { } 
                        {}         
\newtheoremstyle{noindent2} 
                        {12pt}    
                        {12pt}    
                        {\normalfont}         
                        {}         
                        {\scshape}
                        {.}        
                        { } 
                        {}         
\theoremstyle{noindent}
\newtheorem{theorem}{Theorem}[section]
\newtheorem{lemma}[theorem]{Lemma}
\newtheorem{proposition}[theorem]{Proposition}
\theoremstyle{noindent2}
\newtheorem{definition}[theorem]{Definition}
\newtheorem{remark}[theorem]{Remark}
\definecolor{myblue}{RGB}{30,144,255}
\begin{document}
\title{Filterbased Stochastic Volatility in\\ Continuous-Time Hidden Markov Models}

\author[1]{Vikram Krishnamurthy}
\author[2]{Elisabeth Leoff}
\author[3]{J\"orn Sass}
\affil[1]{Department of Electrical and Computer Engineering, University of British Columbia, Vancouver, Canada, vikramk@ece.ubc.ca}
\affil[2]{Department of Financial Mathematics, Fraunhofer Institute for Industrial Mathematics ITWM, Kaiserslautern, Germany, elisabeth.leoff@itwm.fraunhofer.de}
\affil[3]{Department of Mathematics, University of Kaiserslautern, Kaiserslautern, Germany, sass@mathematik.uni-kl.de}
\date{February 5, 2016}

\maketitle

\begin{abstract}
Regime-switching models, in particular Hidden Markov Models (HMMs) where the switching is driven by an unobservable Markov chain, are widely-used in financial applications, due to their tractability and good econometric properties. In this work we consider HMMs in continuous time with both constant and switching volatility. In the continuous-time model with switching volatility the underlying Markov chain could be observed due to this stochastic volatility, and no estimation (filtering) of it is needed (in theory), while in the discretized model or the model with constant volatility one has to filter for the underlying Markov chain. The motivations for continuous-time models are explicit computations in finance. To have a realistic model with unobservable Markov chain in continuous time and good econometric properties we introduce a regime-switching model where the volatility depends on the filter for the underlying chain and state the filtering equations. We prove an approximation result for a fixed information filtration and further motivate the model by considering social learning arguments. We analyze its relation to the switching volatility model and present a convergence result for the discretized model. We then illustrate its econometric properties by considering numerical simulations.\\
\\
{\it Keywords:} Markov switching model, non-constant volatility, stylized facts, portfolio optimization, social learning\\
{\it 2010 Mathematics Subject Classification:} Primary 91G70; Secondary: 91B55, 93E11
\end{abstract}

\section{Introduction}

Regime-switching models are very popular in the field of mathematical finance to describe return processes with time-changing drift or volatility parameters. They are a possible way to generalize the classical Black-Scholes log normal stock price model by making the parameters dependent on a Markov chain with finitely many states. This Markov chain can be interpreted as encoding all the information in the market into one single process that describes the underlying state of the economy. As such this chain is assumed to be unobservable, the normal investor does not have access to all the possible information. The only observation that can be made is the return process itself.\\
In this work we consider both the Hidden Markov Model (HMM) and what we call the Markov Switching Model (MSM) in continuous time. Both models have been studied extensively and are popular for applications. The difference between them is only that the HMM has constant volatility, while in the MSM the volatility jumps with the Markov chain. This difference leads to a difference in the behaviour of information between both models, a distinction which is not present in discrete time, as pointed out in \cite{Elliott2008}. For comparison of estimation procedures for the continuous-time MSM we refer to \cite{Hahn2010}.\\
In finance, one often needs models that are defined in continuous time since these allow for explicit results. To be applicable these have to be close to the results for the real-world discrete-time setting. 
For example, in portfolio optimization the aim is to find the strategy that maximizes the expected utility at terminal time. These optimal strategies can be derived explicitly only in continuous time, but in the HMM they are close to the discrete-time strategies, cf.\ the detailed discussion in Section \ref{sec_po}. Consequently, the continuous-time HMM is such a model which is very well tractable and yields results close to the discrete-time model. Thus we may call it discretization-consistent.  \\
The volatility in the continuous-time HMM is assumed to be constant. Empirical observations of financial data across a wide range of instruments, markets and time periods indicate that for many real data sets this is not true, as was already noted by e.g. Fama in 1965 (\cite{Fama1965}). This leads to the concept of stochastic volatility in model building. There have been many different modeling approaches leading to various models where the volatility itself is a stochastic process. Popular models are e.g. Engle's famous Autoregressive Conditional Heteroskedasticity model (ARCH) from 1982, see Engle (\cite{Engle1982}), and its generalization GARCH, see Bollerslev (\cite{Bollerslev1986}), or in continuous time the well-known Heston model (\cite{Heston1993}).\\
In the context of regime-switching models the natural way to introduce a non-constant volatility is to use the Markov chain to control the volatility, as usually done in discrete time when considering regime switching models, see Hamilton \cite{Hamilton1989}. Switching can also be introduced in the  volatility models mentioned above.  Switching between constant values for the volatility  leads in continuous time to the MSM, which has better econometric properties than the HMM for many applications. Some stylized facts such as volatility clustering cannot be reproduced in the HMM. However, it turns out that in the continuous-time MSM the Markov chain can be observed using the quadratic variation of the return process, see Proposition \ref{prop_no_filter_MSM} below. Therefore, results obtained in the MSM can be far away from results obtained in the corresponding discrete-time model. Thus we do not have the discretization-consistency for the continuous-time MSM as we have it for the HMM.\\
To get a model which is both consistent and has good econometric properties, the idea now is to introduce stochastic volatility in a continuous-time HMM. Haussmann and Sass \cite{Sass2004/2} propose a regime-switching model where the volatility depends on an observable diffusion. They prove that despite the non-constant volatility in this model the chain is still hidden and derive its filtering equation which is again finite-dimensional. The question of how to actually model the volatility process is still left open.
Due to the better econometric properties of the MSM our goal is to find a volatility process that leads to an HMM that in a sense approximates the continuous-time MSM.\\
Thus in this work we introduce an HMM in continuous time where the volatility is a linear function of the (normalized) filter. For a given filtration we construct the volatility process adapted to this filtration that results in the best approximation of the MSM-returns in the mean squared sense. This result motivates the introduction of the Filterbased-Volatility Hidden Markov Model (FB-HMM), where the volatility is a linear function of the normalized filter. In this model the returns and the filter constitute a system of equations coupled by the volatility process. In a sense, this model lies between the HMM and MSM: The chain is still hidden as in the HMM, but the volatility is non-constant similar to the MSM. The function that connects the filter to the volatility process in the FB-HMM is even the same function that connects the chain to the volatility in the MSM. Another motivation for the model comes from an instantaneous iteration of observing the returns and trading (resulting in a new volatility) in a financial market, which is close to social learning arguments, see Remark \ref{rem_social_learning}. \\
But as a function of the unnormalized filter the volatility in the FB-HMM does not satisfy the assumptions in \cite{Sass2004/2}. In this work we state the filtering equations for the FB-HMM, following the methodology from \cite{Elliott1993}, and see that the filter is as tractable as in the HMM from \cite{Sass2004/2}.
Further, the issue of how the model behaves under discretization is very important for practical applications, as we have already mentioned. For the question of approximation and consistency in the FB-HMM we consider the discretized returns. Here, the issue of consistency is a different one than what is typically investigated in statistics. The discretized FB-HMM depends on the continuous-time filter at the discretization points through its volatility, so it is not truly a discrete-time model. For consistency we consider the convergence of the discretized returns instead and prove that already for the simple Euler discretization the global discretization error converges to 0 in $L^2$.\\
We proceed as follows: In Section \ref{sec_def_HMM_MSM} we introduce the HMM and MSM in continuous time, prove the observability of the Markov chain in the continuous-time MSM and consider portfolio optimization in regime-switching models. In Section \ref{sec_fb_vola} we introduce the FB-HMM and prove an approximation result for a fixed information filtration. We further motivate the model by considering social learning arguments and shortly investigate the stylized facts present in the returns. In Section \ref{sec_consistency} we then prove that the Euler-discretization of the FB-HMM returns converges to the continuous-time returns. We conclude the paper by presenting some numerical results.

\section{The HMM and MSM in Continuous Time} \label{sec_def_HMM_MSM}
Consider a return process $(R_t)_{t\in [0,T]}$ with dynamics
\[
\d R_t= \mu_t\d t+\sigma_0\d W_t,
\]
where the unobservable drift process $\mu_t=\mu(Y_t)$ jumps between $d$ levels following a process $Y$. The volatility $\sigma_0\in\R_{>0}$ is constant and $(W_t)_{t\in[0,T]}$ is a standard Brownian motion. Since $\mu_t$ has only finitely many states, we can represent it as $\mu_t=\mu^TY_t$ with $\mu\in\R^{d}$ the so-called state vector. $Y$ is assumed to be an irreducible, time-homogenous Markov chain on $\{e_1,\dots,e_d\}$ with rate matrix $Q\in\R^{d\times d}$. We assume that $Y_0\sim\nu $ with $\nu$ the invariant distribution of $Y$ that exists and is unique due to the irreducibility and finite state space of $Y$.

This model seems natural if we think of the drift as influenced by discrete events such as the arrival of news or shocks that change the overall state of the economy. The Markov chain then models this underlying state of the economy that influences the behaviour of stock returns.

We assume that the Markov chain $Y$ cannot be observed, only the returns can be seen by the investor. Thus we are in the setting of incomplete information. This is why the model is called Hidden Markov Model (HMM). Any investment decision can be made only with respect to the information available from observing $R$, i.e.\ must be adapted to the filtration $\F^R_t=\sigma(R_s\vert s\leq t)$. In the terminology of filtering theory, $R$ is the observation while $Y$ is the signal and we are interested in the conditional expectation $\hat{Y}_t\colonequals\condE{Y_t}{\F^R_t}$.

The filtering equation for the state process is given by the Wonham filter, see \cite{Wonham1965}. It is a finite dimensional equation describing the dynamics of $\hat{Y}_t$:
\[
\d\hat{Y}_t=Q^T\hat{Y}_t\d t+\diag(g)\hat{Y}_t-(g^T\hat{Y}_t)\hat{Y}_t\d V_t,
\]
with $V_t=\int_0^t\sigma_0^{-1}\mu^T(Y_s-\hat{Y}_s)\d s+W_t$ the innovation process, an $\F^R$-adapted $\P$-Brownian motion, and $g^T=\sigma_0^{-1}\mu^T$.

Since this equation is not linear in the filter $\hat{Y}$ one typically uses the Zakai equation for the so-called unnormalized filter which is the conditional expectation under a new measure. The Zakai equations for the unnormalized filters needed for the EM-algorithm were introduced by Elliott in \cite{Elliott1993}.

For these filters we need the martingale density process 
\[
Z_t=\exp\left(-\int_0^t\sigma_0^{-1}\mu^TY_s\d W_s-\frac{1}{2}\int_0^t\pnorm{\sigma_0^{-1}\mu^TY_s}^2\d s\right)
\]
and with this the new measure $\tilde{\P}$ via $\frac{\d \tilde{\P}}{\d\P}=Z_T$. Due to the boundedness of $\sigma_0^{-1}\mu^TY_t$ Girsanov's Theorem guarantees that $\tilde{W}$ defined as
\[
\d\tilde{W}=\sigma_0^{-1}\mu^TY_t\d t+\d W_t
\]
is a $\tilde{\P}$-Brownian motion. Using the new measure we can define the unnormalized filter $\rho$ as
\[
\rho_t=\tilde{\E}[Z_T^{-1}Y_t\vert\F_t^R]
\]
from which the normalized filter can be derived using the Kallianpur-Striebel formula, which is an abstract version of Bayes' formula:
\[
\hat{Y}_t=\frac{\rho_t}{\tilde{\E}[Z_T\vert\F_t^R]}.
\]
Observing that
\[
1=\condE{1}{\F_t^R}=\condE{1^TY_t}{\F_t^R}=1^T\condE{Y_t}{\F_t^R}=1^T\hat{Y}_t=\frac{1^T\rho_t}{\tilde{\E}[Z_T\vert\F_t^R]}
\]
we see that
\[
\hat{Y}_t=\frac{\rho_t}{1^T\rho_t}.
\]
Thus for filtering purposes it is enough to calculate the unnormalized filter $\rho$ which has the advantage that it follows a linear equation driven by the observation process (see \cite{Elliott1993})
\[
\d\rho_t=Q^T\rho_t\d t+\frac{1}{\sigma_0^2}\diag(\mu)\rho_t\d R_t.
\]

One drawback of the HMM is that it assumes the returns to move with a constant volatility. Since we model the drift as depending on the Markov chain describing the current state of the economy it is only natural to model the volatility as depending on the same Markov chain. This leads us to what we call the Markov Switching Model (MSM), as one does usually in the discrete-time model.

\subsection{The Markov Switching Model} \label{sec_MSM}

In the MSM we assume that the volatility has different states that depend on the Markov chain $Y$ in the same way the drift does, i.e.\ 
\[
\d R_t= \mu_t\d t+\sigma_t\d W_t,
\]
with
\begin{align*}
\mu_t&=\mu^TY_t,\quad \mu\in\R^{d}\\
\sigma_t&=\sigma^TY_t, \quad \sigma\in\R^{d}_{>0}.
\end{align*}

The main feature of the MSM is that it allows for a changing volatility process over time. For the case of $\sigma_i=\sigma_j$ for all $i,j$ it can be seen as a generalization of the HMM. One property of the model is that in continuous time the dependence of the volatility on the same Markov chain as the drift makes the Markov chain observable, as we prove in the following proposition.

\begin{proposition}\label{prop_no_filter_MSM}
Let $\d R_t=\mu^TY_t\d t+\sigma^TY_t\d W_t$. Then the signal $Y$ is adapted to $\F^R$. In particular $\E[Y_t\vert\F_t^R]=Y_t$, i.e.\ there is no filtering problem.
\end{proposition}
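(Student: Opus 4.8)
The plan is to exploit the quadratic variation of the return process, which annihilates the finite-variation drift and isolates the volatility. Since in the MSM the volatility $\sigma_t=\sigma^TY_t$ takes only finitely many values and is driven by the \emph{same} chain as the drift, recovering $\sigma_t$ pathwise from the observed trajectory of $R$ will let us read off $Y_t$ directly, so that $\condE{Y_t}{\F^R_t}=Y_t$ holds trivially and no genuine filtering problem remains.

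First I would write $R$ as a continuous semimartingale, $R_t=\int_0^t\mu^TY_s\,\d s+\int_0^t\sigma^TY_s\,\d W_s$, whose bounded-variation part does not contribute to the quadratic variation, so that
\[
\langle R\rangle_t=\int_0^t(\sigma^TY_s)^2\,\d s .
\]
The key observation is that $\langle R\rangle_t$ is a \emph{pathwise} functional of $(R_s)_{s\le t}$: it is the almost-sure limit of the sums $\sum_k(R_{t_{k+1}}-R_{t_k})^2$ along any refining sequence of partitions of $[0,t]$. Hence the whole process $(\langle R\rangle_t)_t$ is adapted to $\F^R$. Because $Y$ is a pure-jump process with almost surely finitely many jumps on $[0,t]$, the map $s\mapsto(\sigma^TY_s)^2$ is piecewise constant and $t\mapsto\langle R\rangle_t$ is piecewise linear, with slope $(\sigma^TY_t)^2=\sigma_i^2$ on the interval where $Y_t=e_i$. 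Differentiating,
\[
\frac{\d}{\d t}\langle R\rangle_t=(\sigma^TY_t)^2 ,
\]
and this derivative is $\F^R_t$-measurable, since it is assembled from the $\F^R$-adapted values $\langle R\rangle_s$. Provided the volatility levels $\sigma_1,\dots,\sigma_d$ are pairwise distinct, the map $e_i\mapsto\sigma_i^2$ is injective, so $Y_t$ is recovered from $(\sigma^TY_t)^2$; thus $Y_t$ is $\F^R_t$-measurable and the claim follows.

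The main obstacle is the rigorous measurability bookkeeping. One must confirm that the derivative recovers the \emph{right-continuous} version $Y_t$ rather than $Y_{t-}$ at the (Lebesgue-null) set of jump times; working under the usual conditions, the right derivative $\lim_{h\downarrow0}h^{-1}(\langle R\rangle_{t+h}-\langle R\rangle_t)$ is $\F^R_{t+}=\F^R_t$-measurable and yields $(\sigma^TY_t)^2$ for every $t$. A second point worth flagging is the necessity of distinct volatility levels: this is precisely the genuine MSM regime, whereas the degenerate case $\sigma_i=\sigma_j$ for all $i,j$ collapses to the constant-volatility HMM, in which $Y$ is genuinely hidden and the proposition fails. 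It is exactly the injectivity of $e_i\mapsto\sigma_i^2$ that makes the chain observable here.
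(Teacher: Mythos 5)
Your proposal is correct and follows essentially the same route as the paper: both recover $(\sigma^TY_t)^2$ as the right derivative of the quadratic variation $[R]_t=\int_0^t(\sigma^TY_s)^2\,\d s$ (the paper via difference quotients $\epsilon_m^{-1}([R]_{t+\epsilon_m}-[R]_t)$, justified by the exponential waiting times keeping $Y$ constant on $[t,t+\epsilon_\omega]$, which is the same right-continuity fact you invoke), then use right-continuity of the filtration, $\F^R_{t+}=\F^R_t$, to place the limit in $\F^R_t$, and finally the injectivity of $e_i\mapsto\sigma_i^2$ for distinct positive $\sigma_i$ to read off $Y_t$. Your remarks on the jump-time subtlety and the necessity of pairwise distinct volatility levels match the paper's (partly implicit) assumptions exactly.
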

\begin{proof}
Fix $t\in [0,T]$ and let $n\in \N$. Due to the exponentially distributed waiting times of $Y$ for almost all $\omega\in\Omega$ we can find some $\epsilon_{\omega}>0$ such that
\[
Y_t(\omega)=Y_{t+\epsilon}(\omega)
\]
for $0\leq\epsilon\leq\epsilon_{\omega}$. Thus the following limit equality holds almost surely
\begin{align*}
(\sigma^TY_t)^2&=\lim_{m\to\infty}\frac{1}{\epsilon_m}\int_t^{t+\epsilon_m}(\sigma^TY_t)^2\d s=\lim_{m\to\infty}\frac{1}{\epsilon_m}\int_t^{t+\epsilon_m}(\sigma^TY_s)^2\d s\\
&=\lim_{m\to\infty}\frac{1}{\epsilon_m}([R]_{t+\epsilon_m}-[R]_t),
\end{align*}
where $\epsilon_m$ is some sequence converging to $0$ from above and $[R]_t=\int_0^t(\sigma^TY_s)^2\d s$ is the quadratic variation of $R$. There exist $M_n$ such that $\epsilon_m<\frac{1}{n}$ for all $m>M_n$, so $\frac{1}{\epsilon_m}([R]_{t+\epsilon_m}-[R]_t)$ is measurable with respect to $\F^R_{t+\frac{1}{n}}$ for $m>M_n$. In particular it follows that the limit $(\sigma^TY_t)^2$ is $\F^R_{t+\frac{1}{n}}$-measurable.
Since $\sigma_i\neq\sigma_j$ for $i\neq j$ and $\sigma_i>0$ this implies that $Y_t$ is measurable with respect to $\F^R_{t+\frac{1}{n}}$ for all $n\in\N$, so also with respect to the intersection $\cap_{n\in\N}\F_{t+\frac{1}{n}}^R$. But $\F^R$ is a right-continuous filtration, so
\[
\bigcap_{n\in\N}\F_{t+\frac{1}{n}}^R=\bigcap_{\epsilon>0}\F_{t+\epsilon}^R=\F_t^R
\]
and $Y$ is indeed adapted to $\F^R$.
\end{proof}

This result has a strong implication for trying to fit the continuous-time MSM to real data: We can observe (discretized) returns on the stock market, but there is no process available that directly corresponds to the Markov chain.\\
The chain describes the general ``state of the economy'', a quantity that is influenced by a lot of different factors with different characteristics, like high-level political decisions, central banks changing their interest rates, the development of real estate prices and similar indices. It is not feasible to quantify all these factors and distill their information into one process $Y$, so for our model we just assumed the chain to be unobservable. But Proposition \ref{prop_no_filter_MSM} tells us that this is not possible for a regime-switching model in continuous time with switching volatilities, the theory always implies that the chain is adapted to the observation filtration. This difference between the HMM and MSM is not present in discrete time. In the discrete-time models, $Y$ is not observable for both constant and switching volatility. Thus, even with careful discretization the information from observing the discrete-time MSM might be far away from the continuous-time model. Consequently, the results obtained in the continuous-time model might not be close the results in the discrete-time MSM. In this sense, the MSM is not consistent.\\

\subsection{Portfolio Optimization in HMM and MSM}\label{sec_po}

To illustrate the difference of HMM and MSM as motivated in the introduction we consider
as application in finance  a portfolio optimization problem. We look at $R$ above as stock returns and introduce further a money market account with constant interest rate $r=0$. We fix a time horizon $T>0$ and  
describe trading by the fraction $\pi_t$ of the portfolio value (wealth) $X^\pi_t$ invested at $t\in[0,T)$ in stocks. For initial capital $x_0$, the wealth controlled by $\pi =(\pi_t)_{t\in[0,T)}$ evolves as
\[
\d X^\pi_t = X^\pi_t(1-\pi_t) r\, \d t + X^\pi_t \pi_t \d R_t = X^\pi_t \pi_t \d R_t, \quad X^\pi_0 = x_0.
\]
An investor whose preferences are given by a utility function $U$, likes to maximize expected utility of terminal wealth, 
\[
\E[U(X^\pi_T)],
\]
over all {\em admissible} strategies $\pi$. Typical utility functions are 
\[
  U_\alpha(x) = \frac{x^\alpha}{\alpha}, \, \alpha < 1, \alpha\not=1 \quad 
\text{ or } \quad U_0(x) = \log(x).
\]
%
%
%

Let $\hat Z_t = \E[Z_t\,|\,\F_t^R]$ denote the conditional density. 
Then Sass and Haussmann \cite{Sass2004} show that in the HMM for utility function $U_\alpha$
\[ \pi^\hmm_t \hspace{-2pt} =\hspace{-2pt} \frac{1}{(1-\alpha) \E\left[ 
 {\hat Z}^{\frac{\alpha}{\alpha-1}}_{t,T}\,|\,\rho_t\right]} 
 \Bigg\{
 \frac{1}{\sigma_0^2} \mu^T\hat{Y}_t \E\hspace{-2pt}\left[{\hat{Z}}^{\frac{2\alpha-1}{\alpha-1}}_{t,T}\,|\,\rho_t\right]
 + \frac{1}{\sigma_0} \E\hspace{-2pt}\Big[{\hat Z}^{\frac{2\alpha-1}{\alpha-1}}_{t,T}\hspace{-2pt} \int_t^T \hspace{-4pt} 
 (D_t\rho_{t,s}) \mu\, \frac{1}{\sigma_0^2} \d R_s \Bigm|\rho_t \Big]\Bigg\}
\]
is optimal, where  ${\hat Z}_{t,T} = {\hat Z}_T/{\hat Z}_t$ and
$ \rho_{t,s} = \rho_s/ \hat{Z}^{-1}_t$. Further $D_t\rho_{t,s}$ is the Malliavin derivative which can be shown  to satisfy a linear SDE due to the linearity of the Zakai equation for $\rho_t$. 
Note that for $U_0=\log$ we get
\[
\pi^\hmm_t = 
\sigma_0^{-2} {\mu}^T{\hat Y_t}.
\]
In the MSM B\"auerle and Rieder \cite{Baeuerle2004} show that for observable $Y$ (which is the case for an MSM due to Proposition \ref{prop_no_filter_MSM}) 
\[
\pi_t^\msm = \frac1{1-\alpha}\, \frac{\mu^T {Y_t}}{({\sigma}^T{Y_t})^2}
\]
is optimal for $U_\alpha$ with $\alpha=0$ corresponding to $U_0=\log$.

%
%
%
%
%
%
Without providing the filtering equations and the model formulation of the more classical discrete-time models here, let us denote the filters in the discretized HMM with constant volatility and in the discretized MSM with switching volatility  by  
$\hat Y_k^\hmm$ and $\hat Y_k^\msm$, respectively. We obtain for mild parameters (fraction staying in [0,1]) that of order $\Delta t$ the optimal strategies for maximizing logarithmic utility are given by risky fractions 
\[
\pi_k^\hmm= \frac{{\mu}^T{\hat Y_k^\hmm}}{\sigma_0^2}+o(\Delta_t), \quad 
\pi_k^\msm= \frac{{\mu}^T{\hat Y_k^\msm}}{({\sigma}^T{\hat Y^\msm_k})^2}+o(\Delta_t),
\]
where $\Delta_t$ denotes the discretization step, see Taksar and Zeng \cite{Taksar2007}.
Note that in $\pi_k^\msm$ the filter $\hat Y^\msm_k$ appears while the corresponding continuous-time strategy $\pi^\msm_t$ is based on the Markov chain $Y_t$ itself. This leads to a large deviation of the corresponding optimization results which is not desirable. For the HMM both are based on the filter, the discrete-time filter being the discretization of the continuous-time filter, see \cite{James1996}. The optimization results for the continuous-time HMM are thus close to the discrete time model, if we look at strategies constrained to $[0,1]$ meaning that neither short selling nor borrowing are allowed. In this sense the HMM is discretization-consistent. 

\bigskip

This predicament of the MSM, the lack of discretization-consistency, is the main motivation for introducing an HMM with observable volatility as we will in the next section.

\section{A Filterbased Volatility Model}\label{sec_fb_vola}
We will now consider a continuous time HMM for stock returns that was introduced by Haussmann and Sass in \cite{Sass2004/2}. As in the classical HMM the return process $R$ depends on the Markov chain $Y$ with state space $\{ e_1,\dots,e_d\}$ via
\[
R_t=\int_0^t\mu^TY_s\d s+\int_0^t\sigma_s\d W_s,
\]
with $W$ a Brownian motion. The drift process of the returns is again given by $\mu^TY_t$, with the state vector $\mu\in\R^{d}$ containing its $d$ possible states.

The difference to the classical HMM is that the volatility process $\sigma_t$ is not constant anymore. We assume that $\sigma_t$ is uniformly bounded with uniformly bounded inverse $\sigma_t^{-1}$.

Since $\sigma_t^{-1}$ is uniformly bounded the process $\theta_t=\sigma_t^{-1}\mu^TY_t$, which is the market price of risk, is uniformly bounded as well and the density process $\d Z_t =-Z_t\theta_t\d W_t$ is a martingale. Thus we can define the reference measure $\tilde{\P}$ for filtering in this model as $\d \tilde{\P}=Z_T\d\P$. Similar to the HMM with constant volatility Girsanov's Theorem guarantees that $\d\tilde{W}=\theta_t\d t+\d W_t$ is a $\tilde{\P}$-Brownian motion.

For the volatility Haussmann and Sass \cite{Sass2004/2} specify a Markovian model with an observable diffusion $\xi$ driven by $\tilde{W}$:
\[
\d\xi_t=\nu(\xi_t)\d t+\tau(\xi_t)\d\tilde{W}_t,
\]
with $\nu,\tau$ continuously differentiable functions of the appropriate dimensions with bounded partial derivatives. The volatility process is then given by
\[
\sigma_t=\bar{\sigma}(\xi_t)
\]
where the function $\bar{\sigma}$ has bounded partial derivatives.

This choice of volatility model preserves the non-observability of the Markov chain from the HMM while having non-constant volatility as in the MSM. In \cite{Sass2004/2} the filtering equation for this model was formulated:

\begin{theorem} \label{thm_fil_sv_hmm}
In the HMM with non-constant volatility the unnormalized filter $\rho$ satisfies 
\[
\d\rho_t=Q^T\rho_t\d t+\frac{1}{\sigma_t^2}\diag(\mu)\rho_t\d R_t.
\]
\end{theorem}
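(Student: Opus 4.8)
The plan is to follow the reference-measure (Zakai) methodology of Elliott \cite{Elliott1993} already outlined for the constant-volatility HMM, adapting it to the $\F^R$-adapted volatility $\sigma_t$. The starting point is the semimartingale representation of the signal: since $Y$ is a Markov chain with rate matrix $Q$, Dynkin's formula gives $\d Y_t = Q^T Y_t\,\d t + \d M_t$, where $M$ is a $\P$-martingale with pure-jump paths. First I would pass to the reference measure $\tilde{\P}$ with $\d\tilde{\P} = Z_T\,\d\P$. Because $Z$ is a continuous exponential martingale driven by $W$, its quadratic covariation with the pure-jump martingale $M$ vanishes, so $M$ remains a martingale under $\tilde{\P}$, while $\tilde{W}$ is a $\tilde{\P}$-Brownian motion. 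The crucial observation is that $\d R_t = \mu^T Y_t\,\d t + \sigma_t\,\d W_t = \sigma_t\,\d\tilde{W}_t$; since $\sigma_t$ is $\F^R$-adapted (being a function of the observable diffusion $\xi$), we get $\F^R = \F^{\tilde{W}}$, so $\tilde{W}$ is an $\F^R$-Brownian motion and, under $\tilde{\P}$, the chain $Y$ is independent of $\F^R$.

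Next I would compute the dynamics of the inverse density $\zeta_t = Z_t^{-1}$. From $\d Z_t = -Z_t\theta_t\,\d W_t$ Itô's formula gives $\d\zeta_t = \zeta_t\theta_t\,\d\tilde{W}_t$, so $\zeta$ is a $\tilde{\P}$-martingale driven only by $\tilde{W}$. Writing $\rho_t = \tilde{\E}[\zeta_t Y_t\mid\F_t^R]$ (which coincides with the stated definition because $\zeta$ is a $\tilde{\P}$-martingale and $Y_t$ is $\F_t$-measurable), I apply the Itô product rule to $\zeta_t Y_t$; using $[\zeta,Y]=0$ (continuous against pure-jump) this yields
\[
\d(\zeta_t Y_t) = \zeta_t Q^T Y_t\,\d t + \zeta_t\,\d M_t + \zeta_t\theta_t Y_t\,\d\tilde{W}_t.
\]
Here I would invoke the elementary identity $Y_t(\mu^T Y_t) = \diag(\mu)Y_t$, valid because $Y_t\in\{e_1,\dots,e_d\}$, to rewrite the diffusion coefficient as $\zeta_t\theta_t Y_t = \sigma_t^{-1}\diag(\mu)(\zeta_t Y_t)$.

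The heart of the argument is the projection onto $\F^R$, i.e.\ taking $\tilde{\E}[\,\cdot\mid\F_t^R]$. The finite-variation term contributes $Q^T\rho_t\,\d t$. The signal-martingale term $\int_0^t\zeta_s\,\d M_s$ drops out, since under $\tilde{\P}$ the jump martingale $M$ is orthogonal to and independent of the observation $\tilde{W}$, so its $\F^R$-optional projection carries no martingale part and no drift. For the diffusion term I would use the standard filtering lemma that, because $\tilde{W}$ is an $\F^R$-Brownian motion, $\tilde{\E}[\int_0^t\phi_s\,\d\tilde{W}_s\mid\F_t^R] = \int_0^t\tilde{\E}[\phi_s\mid\F_s^R]\,\d\tilde{W}_s$; pulling the $\F^R$-measurable factor $\sigma_s^{-1}\diag(\mu)$ through the conditional expectation gives $\int_0^t\sigma_s^{-1}\diag(\mu)\rho_s\,\d\tilde{W}_s$. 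Collecting terms and substituting $\d\tilde{W}_t = \sigma_t^{-1}\,\d R_t$ converts $\sigma_t^{-1}\diag(\mu)\rho_t\,\d\tilde{W}_t$ into $\sigma_t^{-2}\diag(\mu)\rho_t\,\d R_t$, which is exactly the claimed equation.

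I expect the main obstacle to be the rigorous justification of the projection step, in particular the vanishing of the signal-martingale contribution and the interchange of conditional expectation with the $\tilde{W}$-stochastic integral. Both rest on the independence of signal and observation under $\tilde{\P}$ together with $\F^R = \F^{\tilde{W}}$, and this last identity depends essentially on $\sigma_t$ being observable: without the Haussmann--Sass assumption that $\sigma_t = \bar{\sigma}(\xi_t)$ for an $\F^R$-adapted diffusion $\xi$, one could not conclude that $\tilde{W}$ is $\F^R$-adapted and the derivation would break down.
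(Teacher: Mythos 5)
The paper contains no proof of Theorem \ref{thm_fil_sv_hmm} at all --- it is quoted from Haussmann and Sass \cite{Sass2004/2}, and the paper's own analogous filtering theorem for the FB-HMM is likewise dispatched with ``similar methods as in \cite{Elliott1993}''. Your reference-measure derivation --- the decomposition $\d Y_t = Q^T Y_t\,\d t + \d M_t$, the change to $\tilde{\P}$ leaving $M$ a martingale, the product rule for $Z_t^{-1}Y_t$ with the identity $(\mu^T Y_t)Y_t = \diag(\mu)Y_t$, and the projection onto $\F^R$ via $\F^R = \F^{\tilde{W}}$ and the independence of signal and observation under $\tilde{\P}$ --- is precisely this Elliott--Zakai methodology, is correct (including your accurate identification of the observability of $\sigma_t$ as the load-bearing assumption), and thus matches the approach the paper relies on.
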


Using that $\d R_t=\sigma_t\d\tilde{W}_t$ we see that
\[
\d\rho_t=Q^T\rho_t\d t+\diag(\mu)\rho_t\sigma_t^{-1}\d \tilde{W}_t,
\]
so the unnormalized filter itself is a diffusion with respect to $\tilde{W}$ and $\sigma_t$ can be chosen to be a function of the filter.

The intuition behind this choice for $\xi$ is that one interpretation of volatility is that it arises due to the accumulated actions of traders, and traders can only act on the information available to them. The information that is available to market participants about the state of the economy is encoded in the filter, and thus it makes sense to model the volatility as depending on this filter.

The volatility process $\sigma_t=\bar{\sigma}(\xi_t)$ is modeled as a function of an observable diffusion. This model allows for a wide range of possible volatility processes, incorporating e.g. level dependency via $\sigma_t=\bar{\sigma}(S_t)$ and dependencies on other $\F^R$-adapted processes that capture information about the state of the economy. From the form of the filtering equation in this setting we can see that we can also choose the unnormalized filter as the diffusion, since $\bar{\sigma}$ is assumed to be continuously differentiable with bounded derivatives. This leads to the question of how to choose $\xi$ and $\bar{\sigma}$ in a good way.\\

\subsection{The Optimal Approximating Model Given the Filtration}

We want a model for the stock prices that is as close as possible to the MSM while still allowing for a filtering problem. As we are modeling a return process, closeness here means that the output from both models, the return processes, are close to each other in a quantifiable way. A typical measure is the expected squared distance, also known in statistics as the \textit{mean-squared error} (MSE). Since the MSE is given by an expectation it is not always easy to investigate or clear how to decrease it.\\
Consider a continuous-time MSM with parameters $\mu,\sigma^\msm$ and $Q$. The return process $R^{\msm}$ is
\[
R_t^\msm=\int_0^t\mu^TY_s\d s+\int_0^t(\sigma^\msm)^TY_s\d W_s.
\]
Assume that we have some filtration $\G$ that describes the available information. We then want to choose a regime-switching model $R^\G$ driven by the same $Y,W$ and with observable volatility process $\sigma_t^\G$ that approximates the MSM-returns $R^{\msm}$. The best approximation would be the model $\tilde{R}$ with volatility $\tilde{\sigma}_t$ which minimizes the MSE for all times $t$, i.e.\
\[
\tilde{\sigma_t}=\arg\min_{\sigma_t^\G \G_t\text{-mbl.}}\E\left[\abs{R_t^\msm-R_t^\G}^2\right].
\]
Such a minimization problem is typically not easy to solve. Luckily it turns out that in our situation we can reduce the problem to the minimization problem that is solved by conditional expectations:
\begin{theorem}\label{thm_approx_it}
For a fixed filtration $\G$ the minimizer of 
\[
\min_{\sigma_t^\G \G_t\text{-mbl.}}\E\left[\abs{R_t^\msm-R_t^\G}^2\right]
\]
is given by
\[
\tilde{\sigma}_t=\sigma^T\E[Y_t\vert\G_t].
\]
\end{theorem}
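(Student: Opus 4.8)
The plan is to exploit the structural fact that both return processes share the same drift and the same driving Brownian motion $W$, so that their difference is a pure stochastic integral whose $L^2$-norm is computable by the It\^o isometry; this collapses the seemingly hard minimisation into the classical $L^2$-projection characterisation of conditional expectation. Here I write $\sigma$ for the MSM volatility vector $\sigma^\msm$, in line with the statement.

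First I would write the competing model in its natural regime-switching form
\[
R_t^\G=\int_0^t\mu^TY_s\d s+\int_0^t\sigma_s^\G\d W_s,
\]
with the \emph{same} drift $\mu^TY_s$ as the MSM, so that the drift terms cancel and only the volatility difference survives:
\[
R_t^\msm-R_t^\G=\int_0^t\big(\sigma^TY_s-\sigma_s^\G\big)\d W_s.
\]
Since $Y$ takes values in $\{e_1,\dots,e_d\}$ and $\sigma$ is fixed, the integrand $\sigma^TY_s$ is bounded, and admissible controls $\sigma_s^\G$ may be taken square-integrable, so the It\^o isometry applies and yields
\[
\E\big[\abs{R_t^\msm-R_t^\G}^2\big]=\int_0^t\E\big[(\sigma^TY_s-\sigma_s^\G)^2\big]\d s.
\]

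Next I would observe that this objective decouples in time: the value $\sigma_s^\G$ enters only the integrand at instant $s$, with no intertemporal coupling. Hence minimising the integral over adapted processes $(\sigma_s^\G)_{s\le t}$ reduces to minimising, separately for each $s$, the quantity $\E[(\sigma^TY_s-\sigma_s^\G)^2]$ over $\G_s$-measurable random variables. This is precisely the problem solved by orthogonal projection in $L^2$, whose minimiser is the conditional expectation $\sigma_s^\G=\E[\sigma^TY_s\mid\G_s]=\sigma^T\E[Y_s\mid\G_s]$. Taking $s=t$ delivers the claimed optimiser $\tilde{\sigma}_t=\sigma^T\E[Y_t\mid\G_t]$.

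The individual steps are routine; the point requiring care, and which I regard as the main obstacle, is justifying that the time-wise minimisation is legitimate, i.e.\ that one may interchange the infimum over the adapted process with the time integral, and that the resulting candidate is an admissible volatility. For the interchange I would use nonnegativity of the integrand together with the optional projection theorem: the map $s\mapsto\E[\sigma^TY_s\mid\G_s]$ admits a $\G$-adapted (optional) version, which is therefore an admissible control that minimises each integrand pointwise and hence the whole integral. For admissibility I would note that $\E[Y_t\mid\G_t]$ is a probability vector, so $\tilde{\sigma}_t=\sigma^T\E[Y_t\mid\G_t]$ lies in $[\min_i\sigma_i,\max_i\sigma_i]\subset\R_{>0}$; it is thus uniformly bounded with uniformly bounded inverse, consistent with the standing assumptions on the volatility process.
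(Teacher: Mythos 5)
Your proposal is correct and follows essentially the same route as the paper's own proof: cancel the common drift, apply It\^{o}'s isometry and Fubini to reduce the problem to a pointwise-in-time $L^2$-minimisation, and invoke the conditional-expectation characterisation of the $L^2$-projection. Your additional remarks on the optional projection (to justify the time-wise interchange) and on admissibility of $\tilde{\sigma}$ go slightly beyond what the paper spells out, but they refine rather than alter the argument.
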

\begin{proof}
The difference between the return process $R^\msm$ and any $R^\G$ is
\begin{align*}
R^\msm_t-R^\G_t&=\int_0^t\mu^TY_s\d s+\int_0^t\sigma^TY_s\d W_s-\int_0^t\mu^TY_s\d s-\int_0^t\sigma^\G_s\d W_s\\
&=\int_0^t\sigma^TY_s-\sigma_s^\G\d W_s,
\end{align*}
so with It\^{o}'s isometry it follows that
\[
\E[(R^\msm_t-R^\G_t)^2]=\E\Big[\big(\int_0^t\sigma^TY_s-\sigma_s^\G\d W_s\big)^2\Big]=\E\Big[\int_0^t(\sigma^TY_s-\sigma_s^\G)^2\d s\Big].
\]
With Fubini we can reformulate this as
\[
\E[(R^\msm_t-R^\G_t)^2]=\int_0^t\E[(\sigma^TY_s-\sigma_s^\G)^2]\d s.
\]
The integral is surely minimized if we can find a $\G$-measurable $\tilde{\sigma}$ such that at any time $s$ $\E[(\sigma^TY_s-\tilde{\sigma}_s)^2]$ is smaller than $\E[(\sigma^TY_s-\sigma_s^\G)^2]$  for any possible $\sigma_s^\G$.\\
But we know that minimizing the $L^2$-distance over the set of random variables measurable w.r.t.\ a certain $\sigma$-algebra is solved by the conditional expectation w.r.t.\ this $\sigma$-algebra, so
\[
\arg\min_{\sigma_s^\G\G_s\text{-mbl.}}\E[(\sigma^TY_s-\sigma_s^\G)^2]=\E[\sigma^TY_s\vert\G_s]
\]
and
\[
\E[\sigma^TY_s\vert\G_s]=\sigma^T\E[Y_s\vert\G_s].
\]
So the choice of $\tilde{\sigma}_t=\sigma^T\E[Y_t\vert\G_t]$ minimizes $\E[(R^\msm_t-R^\G_t)^2]$ over all processes that are $\G$-adapted.
\end{proof}

\subsection{Iteration and Social Learning}
We see that for a given information filtration $\G$ the best approximation of the MSM would be given by
\[
\tilde{R}_t=\int_0^t\mu^TY_s\d s+\int_0^t\sigma^T\E[Y_s\vert\G_s]\d W_s.
\]

But in practice one chooses $\G_t=\F_t^R$ which depends on the ``real'' volatility process. With this choice of $\G$ changing the model for the volatility changes the available information to $\G^1=\F^{\tilde{R}}$. We end up with a new minimization problem
\[
\tilde{\sigma_t}^{(2)}=\arg\min_{\sigma_t^\G \G_t^1\text{-mbl.}}\E\left[\abs{R_t^\msm-R_t^{\G^1}}^2\right]
\]
that is again solved by
\[
\tilde{\sigma_t}^{(2)}=\sigma^T\E[Y_t\vert\G_t^1].
\]
This gives rise to another filtration generated by $\d \tilde{R}^{(2)}_t= \mu^TY_t\d t+\tilde{\sigma}^{(2)}_t\d W_t$, $\G^2=\F^{\tilde{R}^{(2)}}$. For this filtration we can solve for the best approximation of the MSM using the information $\G^2$ and end up with another approximating model $\tilde{R}^{(3)}$ and a new information filtration.\\
In such a manner we can iteratively construct a sequence of models that all approximate the MSM and depend on different information filtrations. These models all have the serious drawback that their volatilities $\tilde{\sigma}^{(i)}$ are not adapted to the filtrations $\G^i$ generated by the corresponding returns. By construction the volatilities are instead adapted to the filtration $\G^{i-1}$ generated by the returns from the previous step. Thus, we end up with models that are not in the class of HMMs from \cite{Sass2004/2}. In particular, the results from \cite{Sass2004/2} cannot be applied to calculate their filters.\\
\begin{remark}[Bayesian Social Learning]\label{rem_social_learning}
 The above iterative model involving the filtered estimates can also be motivated from a social learning perspective, such models are used widely in economics
to model multi-agent systems see \cite{Cha04,Krishnamurthy2014}. 

Let us first describe the classical social learning model.
Consider a Markov chain  $Y$ with discrete time observations
$$R_{t_k}  = \mu^T Y_{t_k}  + \sigma_{t_k} w_{t_k} , \quad k = 1,2,\ldots $$
where $w$ denotes an independent and identically distributed process.
The dynamics of the social learning model proceeds iteratively as follows:
At time $t_k$, the public belief $ \hat{Y}_{k-1} = \E[ Y_{t_{k-1}} | a_1,\ldots, a_{k-1} ] $ is available.
At time $t_k$, given observation $R_{t_k}$, agent $k$ chooses  action
$$ a_k = \arg\min_{a \in \mathcal{A} }  \E[ c(Y_{t_k},a) | a_1,\ldots,a_{k-1},R_k ]  $$
for some specified cost function $c(y,a)$ and finite action set $\mathcal{A}$.
Given this new action $a_k$, the public belief is updated as
$$ \hat{Y}_k = \E[ Y_{t_k} | a_1,\ldots, a_k ] ,$$
and the model proceeds to time $t_{k+1}$, and so on.
An interesting property of the above social learning model is that agents can form herds and information cascades \cite{Cha04}.

We now construct a modified version of the model considered in this paper which is similar in spirit to the above social learning model.
Suppose
the first trader observes the returns, filters and trades accordingly yielding a new volatility since volatility is generated by trading. The next trader sees the returns coming from this new volatility, computes the filter (which may now be different) and trades accordingly, yielding again an new volatility. This iteration continues and might end at some model as defined below (if we reach a fixed point of this iteration). In practice such a transition of information would occur with small time-steps. Here the volatility is rather motivated as arising from an instantaneous updating which can be seen as instantaneous or degenerate social learning. For trading it might further be motivated from high-frequency trading which allows for a lot of iterations within short 
time intervals. 
\end{remark}
Considering the iterative construction of the models we potentially end up with an infinite sequence. It is not clear whether the iteration ever converges.\footnote{In the discrete time social learning model outlined above, assuming a  finite action set and finite observation set with the Markov chain having identity transition matrix, an elementary application of the martingale convergence theorem
shows that the public belief will converge to a constant with probability one in finite time - this is called an information cascade.}

And for it to truly stop we would somehow need a volatility process such that $\G^{i-1}$ and $\G^i$ coincide. The sequence obviously depends strongly on the initial information $\G$, thus the question is also how to choose it in a reasonable way. Considering our original goal of finding a regime-switching model with better properties compared to the HMM it makes sense to first start with information coming from HMM returns, $\G=\F^\hmm$. Then in the next step we have
\[
\tilde{\sigma}_t=\sigma^T\hy_t^\hmm,
\]
where $\hy^\hmm_t=\E[Y_t\vert\F_t^\hmm]$ is the HMM-filter. But we still have the problem that in this model we cannot compute the filter. We can think about this model as the first step on the way from an HMM to an HMM with non-constant volatility, but we do not truly reach the class of models from \cite{Sass2004/2}.

\subsection{The Filterbased Volatility Model}
Thus, to make our approximation approach useful to really find a model where we can compute the filter, we now construct a model that is invariant under the iteration. Consider the following model
\[
\d R_t^\sv=\mu^TY_t\d t+\sigma^T\E[Y_t\vert\F^\sv_t]\d W_t,
\]
where $\F_t^\sv=\F_t^{R^\sv}$, i.e.\ the volatility is adapted to the observation filtration generated by the returns. This is now a coupling between the volatility and the observations as in \cite{Sass2004/2} and it guarantees that applying the iteration does not change the model, since then
\[
\tilde{\sigma}_t=\sigma^T\E[Y_t\vert\F^\sv_t].
\]
With this choice we now have an approximation of the MSM that is in this sense consistent and minimizes the MSE over all other volatility processes adapted to $\F^\sv$. This gives us a model that truly lies between the HMM and MSM. The distance to the MSM is given by
\[
\E\left[\abs{R_t^\msm-R_t^\sv}^2\right]=\int_0^t\E\left[(\sigma^TY_s-\sigma^T\E[Y_t\vert\F^\sv_t])^2\right]\d s.
\]

This result has deep implications for the volatility choice. In the formulation of the model we started with an arbitrary observable diffusion and only assumed the usual conditions on the functions in the diffusion to ensure the existence of a strong solution. Only by looking at the filtering equation we see that we can e.g.\ choose the unnormalized filter $\rho$ as this diffusion. At this point the motivation for this is only economic intuition since one typically interprets the volatility as arising because of traders whose actions have to be adapted to the available information. As we have pointed out before, the filter collects this information into the best estimate of the true state of the economy.\\
The iterative method for approximating the MSM now implies that using the filter as diffusion $\xi$ is not only an economically reasonable choice but the choice for approximating the MSM where the iteration stops. And furthermore, it also answers the next question after choosing the diffusion, that is how to choose the function $\sigma$ that connects the diffusion to the volatility process. It suggests that the observable volatility process that leads to the best approximation of the MSM is the function with which the volatility in the MSM depends on the chain now applied to the best estimate of the chain, that is, the filter \hy.\\
The function $\bar{\sigma}$ is the scalar product of the vector $\sigma$ with the normalized filter. As such it has the nice property that it is linear in \hy. That does not seem very surprising looking at the simple volatility structure of the MSM we want to approximate, but considering the structure of the HMM in \cite{Sass2004/2} this is a very useful result. This HMM in general allows for much more complicated volatility functions, but Theorem \ref{thm_approx_it} tells us that already the simple scalar product of the filter with the volatility vector from the MSM is the best approximation.\\
Now, after pointing out these very advantageous practical implications, we have to pay attention to the theoretical intricacies arising in this model for the volatility: The volatility in the HMM from \cite{Sass2004/2} depends on a \tw-diffusion, and the filtering equation implies that the unnormalized filter is such a diffusion. Theorem \ref{thm_approx_it} suggests the volatility as a function of the \textit{normalized} filter, so the model actually reads
\[
\d R_t=\mu^TY_t\d t+\sigma^T\frac{\rho_t}{1^T\rho_t} \d W_t.
\]

But the function $\sigma^T\frac{x}{1^Tx}$ does no have bounded derivatives w.r.t\ the $x_i$, so the Zakai equation proven in \cite{Sass2004/2} cannot be applied. The model is still well-defined, since for $x$ with strictly positive entries the function $\sigma^T\frac{x}{1^Tx}$ is not only bounded, but also bounded away from $0$:
\[
0<\min_i \sigma_i\leq \sigma^T\frac{x}{1^Tx}\leq\max_i\sigma_i
\]
as $\sigma$ has only positive entries. This implies that Novikov's condition is fulfilled and we can do the usual measure change leading to the unnormalized filter $\rho_t$. With the existence of the unnormalized filter ensured we see that the model with $\sigma_t=\sigma^T\frac{\rho_t}{1^T\rho_t}$ is indeed well-defined.
This leads to the introduction of a new model:
\begin{definition}
Let the return process $R$ have the dynamics
\[
\d R_t=\mu^TY_t\d t+\sigma^T\hy_t\d W_t
\]
with $Y$ an irreducible, time-homogenous Markov chain with state space $\{e_1,\dots,e_d\}$, $\mu\in\R^d$, $\sigma\in\R^d_{>0}$ and $\hy_t=\E[Y_t\vert\F_t^R]$.
This model is then called the Filterbased-Volatility Hidden Markov Model (FB-HMM).
\end{definition}

As we have already pointed out, this looks like a subclass of the HMM since also
\[
\d R_t=\mu^TY_t\d t+\sigma^T\frac{\rho_t}{1^T\rho_t} \d W_t,
\]
but as we cannot apply Theorem \ref{thm_fil_sv_hmm} it is not clear whether $\rho_t$ is truly a \tw-diffusion. One of the major advantages of regime-switching models is that their filters are actually given by a finite system of equations. So to make our model tractable we have to derive the dynamics of the filter, which we can be done following the methods from \cite{Elliott1993}.

\begin{theorem}
In the Filterbased Volatility Model the unnormalized filter $\rho$ satisfies
\[
\d \rho_t=Q^T\rho_t\d t+\sigma_t^{-2}\diag(\mu)\rho_t\d R_t.
\]
\end{theorem}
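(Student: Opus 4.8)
The plan is to follow the reference-probability (change-of-measure) method of Elliott \cite{Elliott1993} and rederive the equation from scratch, since the very point of the statement is that Theorem \ref{thm_fil_sv_hmm} is \emph{not} available here: the map $x\mapsto\sigma^Tx/(1^Tx)$ defining $\sigma_t=\sigma^T\hy_t$ does not have the bounded derivatives required in \cite{Sass2004/2}. First I would record that the model nonetheless fits the measure change already set up above. Because $\sigma_t=\sigma^T\rho_t/(1^T\rho_t)$ is trapped in $[\min_i\sigma_i,\max_i\sigma_i]\subset(0,\infty)$, the market price of risk $\theta_t=\sigma_t^{-1}\mu^TY_t$ is uniformly bounded, so $\d Z_t=-Z_t\theta_t\,\d W_t$ is a true martingale, the reference measure $\tilde{\P}$ with $\d\tilde{\P}=Z_T\,\d\P$ is well defined, $\d\tw_t=\theta_t\,\d t+\d W_t$ is a $\tilde{\P}$-Brownian motion, and $\d R_t=\sigma_t\,\d\tw_t$. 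Since $\sigma_t$ is $\F^R$-adapted and strictly positive, $\tw$ is $\F^R$-adapted and $\F^R=\F^{\tw}$. The object to track is the unnormalized filter $\rho_t=\tilde{\E}[Z_t^{-1}Y_t\,|\,\F_t^R]$, consistent with $\hy_t=\rho_t/(1^T\rho_t)$.

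Next I would assemble the two ingredients of the It\^o computation. On the signal side, the Markov chain admits the semimartingale representation $Y_t=Y_0+\int_0^tQ^TY_s\,\d s+M_t$, where $M$ is a purely discontinuous martingale; since $M$ shares no jumps with the continuous process $W$ we have $[M,W]=0$, so $M$ remains a $\tilde{\P}$-martingale with $[M,\tw]=0$. On the density side, It\^o's formula gives $\d(Z_t^{-1})=Z_t^{-1}\theta_t\,\d\tw_t$, a continuous $\tilde{\P}$-martingale driven by $\tw$. Applying the product rule to $Z_t^{-1}Y_t$, the covariation term vanishes ($Z^{-1}$ continuous, $M$ orthogonal), leaving
\[
\d(Z_t^{-1}Y_t)=Z_t^{-1}Q^TY_t\,\d t+Z_t^{-1}\,\d M_t+Z_t^{-1}Y_t\theta_t\,\d\tw_t.
\]
The algebraic identity that makes everything close is that, because $Y_s$ is a canonical basis vector, $Y_s\,\mu^TY_s=\diag(\mu)Y_s$, whence $Y_s\theta_s=\sigma_s^{-1}\diag(\mu)Y_s$ and the last integrand equals $\sigma_s^{-1}\diag(\mu)\,Z_s^{-1}Y_s\,\d\tw_s$.

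Finally I would take the optional projection $\tilde{\E}[\,\cdot\,|\,\F_t^R]$ of this equation under $\tilde{\P}$. The drift projects to $Q^T\rho_t\,\d t$; the integral against $\d M$ projects to zero, since $M$ is a $\tilde{\P}$-martingale orthogonal to the observations; and for the $\tw$-integral, pulling the $\F_s^R$-measurable factor $\sigma_s^{-1}\diag(\mu)$ out of the conditional expectation yields $\int_0^t\sigma_s^{-1}\diag(\mu)\rho_s\,\d\tw_s$. Rewriting $\d\tw_t=\sigma_t^{-1}\,\d R_t$ then gives exactly $\d\rho_t=Q^T\rho_t\,\d t+\sigma_t^{-2}\diag(\mu)\rho_t\,\d R_t$. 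The main obstacle is this projection step: it is an instance of the Fujisaki--Kallianpur--Kunita theorem and requires that under $\tilde{\P}$ the observation noise $\tw$ be a Brownian motion independent of the signal $Y$, equivalently that conditional expectation commute with the two stochastic integrals as claimed. In \cite{Sass2004/2} this is delivered by the cited filtering theorem, but here, with $\sigma_t$ itself a functional of $\rho_t$, I would have to verify it directly---confirming $\F^R=\F^{\tw}$ and that the innovations argument survives the feedback of the filter into the volatility. Once that independence and orthogonality under $\tilde{\P}$ are established, the computation above is routine.
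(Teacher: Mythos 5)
Your proposal is correct and follows exactly the route the paper intends: its own proof is a two-line sketch that records precisely the property you use --- that $\sigma^T\rho_t/(1^T\rho_t)$ is bounded above and away from $0$, so the measure change is legitimate --- and then defers to the reference-probability method of \cite{Elliott1993}. You have simply carried out in detail (semimartingale decomposition of $Y$, It\^{o} product rule for $Z_t^{-1}Y_t$, the identity $Y_s\mu^TY_s=\diag(\mu)Y_s$, and the conditional-expectation projection under $\tilde{\P}$, whose feedback subtlety you rightly flag) the very argument the paper leaves implicit.
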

\begin{proof}
The function $\pinner{\sigma}{\frac{\rho_t}{\id^T\rho_t}}$ is bounded from above in $\rho_t$ and also bounded away from $0$. Using these properties the claim can be proven using similar methods as in \cite{Elliott1993}.
\end{proof}

\subsection{Stylized Facts}

Real data often exhibit certain characteristics, known as \textit{stylized facts}, see e.g.\ \cite{Ryden1998},\cite{Cont2001}. This behavior is common across a wide range of instruments, markets and time periods. Thus an approach for assessing the quality of a model is to see which stylized facts can be reproduced by a return process coming from the model. The main motivation for the introduction of the FB-HMM were the better econometric properties of the MSM due to its switching volatility. The MSM can e.g.\ capture volatility clustering, which is the observation that high volatility events tend to cluster in time, followed by phases of calmer movement of the prices.
\subparagraph*{Volatility Clustering}
In the FB-HMM the volatility process is controlled by the filter. So any clustering of high volatility events must come from time periods where the filter puts a comparably large probability weight on states where the corresponding entry in $\sigma$ is higher than the others. The filter is an estimate of the true state of $Y$, so if $Y$ is in the state with large entry in $\sigma$ and the filter performs well, then there will be a phase of high volatility. Volatility clustering in the filterbased model is thus connected to not only the values in $\sigma$ directly, as it is in the MSM via the Markov chain, but also to the performance of the filter.
\subparagraph*{Leverage Effect} The so-called leverage effect concerns an asymmetric relation between the volatility and the returns of a financial asset. In particular, a statistical leverage effect is the fact that the correlation between past returns and future volatility is mostly observed to be negative, which is in accord with one's economic intuition. In the MSM one can derive a heuristic for choosing the parameters that leads to a negative value of the approximated covariance. Due to the tractability of the transition probabilities in the 2-state model, one can then prove that in this case the heuristic leads to a negative covariance, see \cite{Leoff2016}. Since the FB-HMM approximates the MSM, it is intuitive to apply the same heuristic also here. But the coupled structure makes it impossible to analytically derive the covariance even for $d=2$. Numerical examples still indicate a negative covariance, but just as for volatility clustering, the extent to what this stylized fact can be introduced depends directly on the performance of the filter.
\subparagraph*{Asymmetry} The next stylized fact which we consider is the often observed asymmetry of the return distribution. Large drawdowns in prices are more often observed than equally large upward movements. This asymmetry arises naturally e.g.\ from the existence of thresholds below which positions must be cut unconditionally for regulatory reasons. Declining stock prices are more likely to give rise to massive portfolio rebalancing (and thus volatility) than increasing stock prices, which connects the asymmetry of return distributions to the leverage effect.\\
Cont noted in \cite[Section 4.1]{Cont2001} that one must use more than only the standard deviation to capture the variability of the returns due to the non-Gaussian nature of most of the empirical data. One popular approach is to consider higher moments as a measure of dispersion.\\
These moments are very involved in the continuous-time MSM due to the unbounded number of possible jumps in every finite interval. Instead we can again employ a heuristic choice that uses the different entries in $\sigma$ to push more of the probability mass to the left tail of the distribution. This choice can be made in a way that is consistent with the heuristic for the leverage effect. We can apply the same choice for the FB-HMM and underline with numerical examples in Section \ref{sec_numerics} that these parameters then indeed lead to an asymmetry in the histograms.\\
\subparagraph*{Autocorrelation Function}
The autocorrelation functions (acf) are used to describe and quantify the long-term dependence between increments of the return processes. Their properties are similar over a wide range of  empirical data and thus for model-building it is of importance to know their characteristics. In particular one is interested in the linear autocorrelations and the autocorrelations of the absolute value of the returns.\\
In \cite{Fruehwirth2006} the autocorrelations of the discrete-time HMM and MSM are presented following the work of Timmermann in \cite{Timmermann2000}. In \cite{Leoff2016} we applied similar methods for the continuous-time case. For the FB-HMM we have the following lemma:
\begin{lemma}\label{lemma_acf_sv}
If $R$ follows the FB-HMM we have for $0\leq t<s\leq T$ with $t+\Delta_t< s$
\begin{align*}
\mathrm{Cov}(\Delta R_t, \Delta R_s)&=\int_s^{s+\Delta_t}\int_t^{t+\Delta_t}\sum_{i,j=1}^d\mu_i\mu_j(\e^{Q(r-u)})_{ij} \nu_i\d u\d r\\ 
&\quad +\E\Big[\int_t^{t+\Delta_t}\sigma^T\hat{Y}_r\d W_r\int_s^{s+\Delta_t}\mu^TY_r\d r\Big]-\Delta_t^2(\mu^T\nu)^2.
\end{align*}
\end{lemma}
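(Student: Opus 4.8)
The plan is to compute the covariance straight from its definition, $\mathrm{Cov}(\Delta R_t,\Delta R_s)=\E[\Delta R_t\,\Delta R_s]-\E[\Delta R_t]\,\E[\Delta R_s]$, after splitting each increment into its drift and martingale parts. Writing $\Delta R_t=A_t+M_t$ with $A_t=\int_t^{t+\Delta_t}\mu^TY_r\,\d r$ and $M_t=\int_t^{t+\Delta_t}\sigma^T\hy_r\,\d W_r$ (and analogously $\Delta R_s=A_s+M_s$), the first step is to record that the integrand $\sigma^T\hy_r$ is bounded and $\F^R$-adapted, hence adapted to the full filtration carrying $W$ and $Y$, so that $M_t$ and $M_s$ are genuine square-integrable $\P$-martingale increments. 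This gives $\E[M_t]=\E[M_s]=0$, and combined with the stationarity $Y_r\sim\nu$ (since $Y_0\sim\nu$ is the invariant law) it yields $\E[\Delta R_t]=\E[\Delta R_s]=\Delta_t\,\mu^T\nu$; their product is precisely the last term $-\Delta_t^2(\mu^T\nu)^2$ of the claim.

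Next I would expand the second moment as $\E[\Delta R_t\,\Delta R_s]=\E[A_tA_s]+\E[A_tM_s]+\E[M_tA_s]+\E[M_tM_s]$ and handle the four pieces separately. The drift--drift term produces the explicit double integral: by Fubini $\E[A_tA_s]=\int_s^{s+\Delta_t}\int_t^{t+\Delta_t}\E[(\mu^TY_u)(\mu^TY_r)]\,\d u\,\d r$, and since the windows do not overlap ($u\le t+\Delta_t<s\le r$) the Markov property of $Y$ gives $\E[(Y_u)_i(Y_r)_j]=\nu_i(\e^{Q(r-u)})_{ij}$, using $\P(Y_u=e_i)=\nu_i$ together with the transition semigroup $\e^{Q(r-u)}$. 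Summing against $\mu_i\mu_j$ recovers exactly the first term of the statement.

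The two cross terms involving the later increment $M_s$ vanish by conditioning on $\F_{t+\Delta_t}$: because $\F_{t+\Delta_t}\subseteq\F_s$ and $\E[M_s\,|\,\F_s]=0$, the tower property gives $\E[M_s\,|\,\F_{t+\Delta_t}]=0$, and since both $A_t$ and $M_t$ are $\F_{t+\Delta_t}$-measurable this forces $\E[A_tM_s]=\E[M_tM_s]=0$. What is left is precisely $\E[M_tA_s]=\E[\int_t^{t+\Delta_t}\sigma^T\hy_r\,\d W_r\int_s^{s+\Delta_t}\mu^TY_r\,\d r]$, the middle term of the lemma.

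The main obstacle — and the reason the statement stops exactly here — is that this remaining term $\E[M_tA_s]$ cannot be collapsed onto the transition semigroup the way the others were. Conditioning on $\F_{t+\Delta_t}$ only rewrites $A_s$ as a function of $Y_{t+\Delta_t}$, but $M_t$ and $Y_{t+\Delta_t}$ stay correlated since the filter $\hy_r$ inside $M_t$ is itself a functional of the coupled return process; this is the same feedback between filter and returns noted in the discussion of the leverage effect, and it is what makes a fully closed-form covariance unavailable. The honest conclusion is therefore to leave this term explicit. I would close by observing that all the integrability required for Fubini and for the martingale arguments follows from the boundedness of $\mu^TY$, $\sigma^T\hy$ and $\sigma^{-1}$ already assumed in the model.
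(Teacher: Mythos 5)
Your proposal is correct and follows essentially the same route as the paper: expand the covariance into drift and martingale pieces, evaluate the drift--drift term via the transition semigroup $\e^{Q(r-u)}$ with the stationary law $\nu$, kill the two terms involving the later stochastic integral by a martingale argument (the paper conditions on $\F_s^{Y,W}$ and uses It\^o's isometry for the $M_tM_s$ term, while you use the tower property at $\F_{t+\Delta_t}$ --- interchangeable versions of the same orthogonality), and leave $\E\big[\int_t^{t+\Delta_t}\sigma^T\hat{Y}_r\,\d W_r\int_s^{s+\Delta_t}\mu^TY_r\,\d r\big]$ explicit. Your closing remark about why this last term resists further simplification (the filter--return feedback) matches the paper's own discussion and correctly identifies why the lemma stops where it does.
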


\begin{proof}
Calculating the covariance and using $\E[\hy_t]=\E[Y_t]=\nu$ implies
\begin{align*}
&\mathrm{Cov}(\Delta R_t, \Delta R_s)\\
&=\E\Big[\Big(\int_t^{t+\Delta_t}\mu^TY_r\d r+\int_t^{t+\Delta_t}\sigma^T\hy_r\d W_r -\Delta_t\mu^T\nu\Big)\\
&\quad\Big(\int_s^{s+\Delta_t}\mu^TY_r\d r+\int_s^{s+\Delta_t}\sigma^T\hy_r\d W_r-\Delta_t\mu^T\nu\Big)\Big]\\
&= \E\Big[\int_t^{t+\Delta_t}\mu^TY_r\d r \int_s^{s+\Delta_t}\mu^TY_r\d r\Big]\\
&\quad +\E\Big[\int_t^{t+\Delta_t}\mu^TY_r\d r\int_s^{s+\Delta_t}\sigma^T\hy_r\d W_r+\int_t^{t+\Delta_t}\sigma^T\hy_r\d W_r\int_s^{s+\Delta_t}\mu^TY_r\d r\Big]\\
&\quad +\E\Big[\int_t^{t+\Delta_t}\sigma^T\hy_r\d W_r\int_s^{s+\Delta_t}\sigma^T\hy_r\d W_r\Big] -(\Delta_t\mu^T\nu)^2.
\end{align*}

The first summand depends on $Y$ which has only finitely many states, thus we can reformulate it as
\[
\E\Big[\int_t^{t+\Delta_t}\mu^TY_r\d r \int_s^{s+\Delta_t}\mu^TY_r\d r\Big]=\int_s^{s+\Delta_t}\int_t^{t+\Delta_t}\sum_{i,j=1}^d\mu_i\mu_j(\e^{Q(r-u)})_{ij} \nu_i\d u\d r.
\]

For the last expectation we apply It\^{o}'s isometry to obtain
\begin{align*}
\E\Big[\int_t^{t+\Delta_t}\hspace{-4pt}\sigma^T\hy_r\d W_r\int_s^{s+\Delta_t}\hspace{-4pt}\sigma^T\hy_r\d W_r\Big]&=\E\Big[\int_0^{s+\Delta_t}\hspace{-4pt}\mathbbm{1}_{[t,t+\Delta_t]}(r)\sigma^T\hy_r \mathbbm{1}_{[s,s+\Delta_t]}\sigma^T\hy_r \d r \Big]\\
&=0.
\end{align*}

Now it is left to show that $\E\Big[\int_t^{t+\Delta_t}\mu^TY_r\d r\int_s^{s+\Delta_t}\sigma^T\hy_r\d W_r\Big]=0$. Observe that since $\hat{Y}$ is bounded $\int_0^u\sigma^T\hy_r\d W_r $ is a true martingale w.r.t.\ $\F_u^{Y,W}$, so 
\begin{align*}
\E\Big[\int_s^{s+\Delta_t}\hspace{-4pt}\sigma^T\hy_r\d W_r\vert\F_s^{Y,W}\Big]&=\E\Big[\int_0^{s+\Delta_t}\hspace{-4pt}\sigma^T\hy_r\d W_r\vert\F_s^{Y,W}\Big]-\E\Big[\int_0^s\hspace{-4pt}\sigma^T\hy_r\d W_r\vert\F_s^{Y,W}\Big]\\
&=\int_0^s\sigma^T\hy_r\d W_r-\int_0^s\sigma^T\hy_r\d W_r\\
&=0
\end{align*}
and since $t+\Delta_t<s$
\begin{align*}
\E\Big[\int_t^{t+\Delta_t}\mu^TY_r\d &r\int_s^{s+\Delta_t}\sigma^T\hy_r\d W_r\Big]\\
&=\E\Big[\E\big[\int_t^{t+\Delta_t}\mu^TY_r\d r\int_s^{s+\Delta_t}\sigma^T\hy_r\d W_r\vert\F_s^{Y,W}\big]\Big]\\
&=\E\Big[\int_t^{t+\Delta_t}\mu^TY_r\d r\E\big[\int_s^{s+\Delta_t}\sigma^T\hy_r\d W_r\vert\F_s^{Y,W}\big]\Big]\\
&=0. 
\end{align*}
\end{proof}

This representation of the linear autocovariances can be used for numerical estimation. In \cite{Leoff2016} we also estimated the autocorrelations of the absolute value, the sign and the squares of the return increments. We saw that they seem to converge to $0$, but slower than in both the MSM and HMM. For the linear acf this is probably due to the additional summand $\E[\int_t^{t+\Delta_t}\sigma^T\hat{Y}_r\d W_r\int_s^{s+\Delta_t}\mu^TY_r\d r]$, and we suspect a similar effect for the other acf. The autocorrelations of the squares were significantly positive in the simulations, which underlines the fact that there is volatility clustering in the FB-HMM. Comparing the estimations for the HMM and MSM to the FB-HMM we could make another interesting observation: In the MSM the linear acf has the same order of magnitude as the absolute acf, while in the HMM the linear acf is much larger than the absolute acf. In the FB-HMM we observed that as in the HMM the linear acf is larger than the absolute acf, but the effect is much less pronounced. This again supports the intuition with which we introduced the FB-HMM: In a sense the FB-HMM lies between the MSM and the HMM.

\section{Consistency of the Discretized Model}\label{sec_consistency}
As we are interested in the relationship between discrete- and continuous-time models, we will now examine a notion of consistency between the discretized and the continous-time returns in our new model. 

The FB-HMM in discrete time can be defined in a similar way as a discrete-time HMM as
\[
R_k=\mu^TY_{k-1}+\sigma^T\hy_{k-1}w_k,
\]
with the $w_k$ i.i.d.\ $\mathcal{N}(0,1)$-distributed.\\
If we want to obtain a model that truly lives in discrete time $\hy_{k-1}$ has to be the estimate of $Y_{k-1}$ at time $k-1$ using only the discrete-time returns, i.e.\ $\hy_{k-1}=\E[Y_{k-1}\vert\F^R_{k-1}]$.\\
Instead of this discrete-time model we can also discretize the returns $R$ in the continuous-time model directly as
\begin{equation} \label{equ_disc_fb_hmm}
\bar{R}_k=\mu^TY_{t_{k-1}}\Delta_t+\sigma^T\hy_{t_{k-1}}(W_{t_k}-W_{t_{k-1}}),
\end{equation}
where we now use the discretized continuous-time filter $\hy_{t_{k-1}}=\E[Y_{t_{k-1}}\vert\F^R_{t_{k-1}}]$. In the HMM and MSM these discretizations would be the same: In the HMM we would use $\sigma^T\nu$ for both $R_k$ and $\bar{R}_k$, and in the MSM we would directly use the chain $Y$ at time $t_{k-1}$. This kind of consistency is not given for the FB-HMM.\\
Discretizing the continuous-time returns directly does not lead to a time series that follows a discrete-time FB-HMM. Instead we consider the consistency in the returns themselves by investigating the behavior of the discretized returns for a discretization step converging to $0$. In the following we will prove that if we aggregate the $\bar{R}_k$ over time, the resulting sum does indeed converge to the continuous-time process.\\

Let $R$ follow a continuous-time FB-HMM with parameters $\mu,\sigma$ and $Q$
\[
\d R_t=\mu^TY_t\d t+\sigma^T\hy_t\d W_t.
\]
For $n\in\N$ consider the discretization $0=t_0^n,t_1^n,\dots, t_n^n=T$ of the time interval $[0,T]$ given by
\[
\Delta_t^n=\frac{T}{n}, \qquad t_k^n=k\Delta_t^n
\]
and define the discretization of $R$ as
\[
R_k^n=R_{k-1}^n+\mu^TY_{t^n_{k-1}}\Delta_t^n+\sigma^T\hy_{k-1}^n\Delta W^n_k, \qquad R_0^n=0
\]
where $\hy^n_k,Y_k^n$ and $\Delta W^n_k$ are now the discretizations of $\hy,Y$ and $W_{t^n_k}-W_{t^n_{k-1}}$ respectively, i.e.
\begin{align*}
Y_k^n&=Y_{t_k^n},\\
\hy_k^n&=\hy_{t_k^n},\\
\Delta W_k^n&=W_{t_k^n}-W_{t_{k-1}^n}.
\end{align*}
This discretization $R^n$ is just the cumulative sum of the usual discretizations. We use the sum here since we want to approximate a continuous-time process that is given by an integral. This kind of method is usually known as the (explicit) \textit{Euler scheme}. It has the advantage that it is recursive and thus easy to implement.\\
To quantify the global discretization error we introduce the piecewise constant approximations of $Y,\hy$ on this grid as
\begin{align*}
Y^n_t&=Y_{t_k^n},\\
 \hy^n_t&=\hy_{t_k^n}
\end{align*}
for $t \in [t_k^n,t^n_{k+1})$.
Then the difference between the discretization $R^n$ and the continuous-time process at the final time is given by
\[
\epsilon^n\colonequals R_T-R_n^n=\int_0^T\mu^TY_t\d t+\int_0^T\sigma^T\hy_t-\sum_{k=1}^n\mu^TY_{k-1}^n\Delta_t^n++\sigma^T\hy_{k-1}^n\Delta W_k^n.
\]
Using $Y^n$ and $\hy^n$ this can be reformulated as
\[
\epsilon^n=\int_0^T\mu^T(Y_t-Y_t^n)\d t+\int_0^T\sigma^T(\hy_t-\hy_t^n)\d W_t.
\]

We now show that this global error actually converges to zero if the grid becomes finer. This means that already the simple Euler discretization of the continuous-time returns is convergent and there is no need for higher-order methods to achieve consistency.

\begin{theorem} \label{thm_fb_returns_euler}
For arbitrarily fine grids the global discretization error at $T$ tends to 0 in $L^2$, that is
\[
\E\big[\norm{\epsilon^n}^2\big]\to 0, \quad(n\to\infty).
\]
\end{theorem}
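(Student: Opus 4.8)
The plan is to split the error into its drift and diffusion contributions and to control each separately. Using $(a+b)^2 \le 2a^2 + 2b^2$ we obtain
\[
\E\big[\norm{\epsilon^n}^2\big] \le 2\,\E\Big[\Big(\int_0^T \mu^T(Y_t - Y_t^n)\,\d t\Big)^2\Big] + 2\,\E\Big[\Big(\int_0^T \sigma^T(\hy_t - \hy_t^n)\,\d W_t\Big)^2\Big],
\]
so it suffices to show that both terms tend to $0$ as $n\to\infty$.

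For the drift term I would exploit that $Y$ is a finite-state Markov chain with only finitely many jumps on $[0,T]$. On each subinterval $[t_{k-1}^n, t_k^n)$ the approximation satisfies $Y_t^n = Y_{t_{k-1}^n}$, so $Y_t \ne Y_t^n$ can occur only on subintervals in which $Y$ actually jumps, and there the discrepancy persists for at most $\Delta_t^n$ units of time. Writing $N_T$ for the total number of jumps of $Y$ in $[0,T]$, the Lebesgue measure of $\{t : Y_t \ne Y_t^n\}$ is thus bounded by $N_T\,\Delta_t^n$; since $\mu^T(Y_t - Y_t^n)$ is uniformly bounded, this gives a pathwise bound of order $N_T\,\Delta_t^n$ for the drift integral. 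Taking expectations and using that $N_T$ is stochastically dominated by a Poisson variable (with rate the maximal exit rate of $Q$), so that $\E[N_T^2] < \infty$, bounds the drift term by a constant times $(\Delta_t^n)^2$, which tends to $0$.

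For the diffusion term, since $\hy$ takes values in the probability simplex the integrand $\sigma^T(\hy_t - \hy_t^n)$ is uniformly bounded, so It\^o's isometry applies and, together with Fubini, gives
\[
\E\Big[\Big(\int_0^T \sigma^T(\hy_t - \hy_t^n)\,\d W_t\Big)^2\Big] = \int_0^T \E\big[(\sigma^T(\hy_t - \hy_t^n))^2\big]\,\d t .
\]
I would then argue by a double application of dominated convergence. The filter $\hy_t = \rho_t/(1^T\rho_t)$ inherits continuous paths from the unnormalized filter $\rho$, which solves a linear SDE driven by the continuous process $R$ and whose denominator $1^T\rho_t$ stays strictly positive. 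For $t \in [t_{k-1}^n, t_k^n)$ we have $\hy_t^n = \hy_{t_{k-1}^n}$ with $t_{k-1}^n \uparrow t$, so path continuity yields $\hy_t^n \to \hy_t$ almost surely for each fixed $t$. The bounded integrand then forces $\E[(\sigma^T(\hy_t - \hy_t^n))^2] \to 0$ pointwise in $t$ by bounded convergence, and a second dominated-convergence argument in the $t$-integral (whose integrand is bounded by a constant) shows the whole expression tends to $0$.

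The step I expect to be the main obstacle is the diffusion term, and specifically the justification of the pathwise continuity of the filter $\hy$ that drives the pointwise convergence $\hy_t^n \to \hy_t$: one must verify that $\rho$ genuinely has continuous paths under the FB-HMM dynamics and that $1^T\rho_t$ is bounded away from $0$ so that normalization preserves continuity. The drift estimate, by contrast, is elementary once the finiteness of $\E[N_T^2]$ is invoked, the only care being the bookkeeping of how jumps of $Y$ translate into the measure of the set $\{t : Y_t \ne Y_t^n\}$.
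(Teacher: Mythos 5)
Your proof is correct and follows essentially the same route as the paper: the same drift/diffusion decomposition, control of the drift term through the jumps of $Y$ on the grid intervals, and It\^{o} isometry plus path continuity of the filter $\hy$ for the diffusion term. The only deviations are harmless and in fact slightly sharper: you bound the drift term via the squared jump count with Poisson domination of $N_T$, yielding $O((\Delta_t^n)^2)$ and avoiding the paper's appeal (via Jensen's inequality and stationarity of $Y$ under $\nu$) to $\P(Y\text{ jumps on }[0,t_1^n))\to 0$, and you replace the paper's almost-sure convergence of $\sup_{s\in[0,T)}\norm{\hy_s-\hy_s^n}$ by a pointwise double dominated-convergence argument; like the paper, you ultimately rest on the pathwise continuity of $\hy$, which the paper leaves implicit and which you at least flag and correctly sketch (the Zakai equation exhibits $\rho$ as a solution of an SDE driven by the continuous process $R$, and $1^T\rho_t>0$).
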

\begin{proof}
First we gather some observations that we will use in the course of the proof:
\begin{enumerate}[i)]
\item For $a,b\in \R$ 
\begin{align*}
(a+b)^2&\leq (\vert a\vert+\vert b\vert)^2\\
(\vert a\vert+\vert b\vert)^2&\leq 4( a^2+ b^2).
\end{align*}
\item For any real valued function $f$
\[
\Big( \int_0^tf(s)\d s\Big)^2\leq \Big(\int_0^t\vert f(s)\vert\d s \Big)^2.
\]
\item For $N\in\N$ and arbitrary $a_k\in\R$ Jensen's inequality implies
\[
\Big(\frac{1}{N}\sum_{k=1}^N{a_k}\Big)^2\leq \frac{1}{N}\sum_{k=1}^N{a_k^2}.
\]
\end{enumerate}

Now we proceed with proving the theorem: The squared $L^2$-norm of the error $\epsilon^n$ can be estimated using the reformulation $\epsilon^n=\int_0^T\mu^T(Y_t-Y_t^n)\d t+\int_0^T\sigma^T(\hy_t-\hy_t^n)\d W_t$ as
\begin{align*}
\norm{\epsilon^n}^2_{L^2}&=\E\Big[ \Big( \int_0^T\mu^T(Y_t-Y_t^n)\d t+\int_0^T\sigma^T(\hy_t-\hy_t^n)\d W_t \Big)^2 \Big]\\
&\stackrel{\text{i)}}{\leq} 
\E\Big[ \Big( \big\vert\int_0^T\mu^T(Y_t-Y_t^n)\d t\big\vert+\big\vert\int_0^T\sigma^T(\hy_t-\hy_t^n)\d W_t\big\vert\Big)^2  \Big]\\
&\stackrel{\text{i)}}{\leq}
4\E\Big[ \Big( \int_0^T\mu^T(Y_t-Y_t^n)\d t \Big)^2 \Big]+4\E\Big[ \Big( \int_0^T\sigma^T(\hy_t-\hy_t^n)\d W_t\Big)^2  \Big].\\
\end{align*}
Thus it suffices to show that both of the summands converge to $0$.\\
First, consider that 
\begin{equation*}
\E\Big[ \Big( \int_0^T\mu^T(Y_t-Y_t^n)\d t \Big)^2 \Big]\stackrel{\text{ii)}}{\leq}\E\Big[ \Big( \int_0^T\vert\mu^T(Y_t-Y_t^n)\d t \Big)^2 \Big]
\end{equation*}
and by Cauchy-Schwartz
\begin{align*}
\int_0^T\vert\mu^T(Y_t-Y_t^n)\vert\d t&=\sum_{k=1}^n{\mathbbm{1}_{\{Y\text{ jumps on }[t^n_{k-1},t_k)\}}\int_{t^n_{k-1}}^{t^n_k}\vert\mu^T(Y_t-Y_t^n)\vert\d t} \\
&\leq 2\norm{\mu}\Delta_t^n \sum_{k=1}^n{\mathbbm{1}_{\{Y\text{ jumps on }[t^n_{k-1},t^n_k)\}}}.
\end{align*}
So 
\begin{align*}
\E\Big[ \Big( \int_0^T\mu^T(Y_t-Y_t^n)\d t \Big)^2 \Big]&\leq \E\Big[ \Big( 2\norm{\mu}\Delta_t^n \sum_{k=1}^n{\mathbbm{1}_{\{Y\text{ jumps on }[t^n_{k-1},t^n_k)\}}} \Big)^2 \Big]\\
&=4\norm{\mu}^2(\Delta_t^n)^2\E\Big[ \Big( \sum_{k=1}^n{\mathbbm{1}_{\{Y\text{ jumps on }[t^n_{k-1},t^n_k)\}}} \Big)^2 \Big]\\
&\stackrel{(\text{iii)}}{\leq}4\norm{\mu}^2(\Delta_t^n)^2n^2\E\Big[ \frac{1}{n}\sum_{k=1}^n{\mathbbm{1}_{\{Y\text{ jumps on }[t^n_{k-1},t^n_k)\}}} \Big]\\
&=4\norm{\mu}^2\frac{T^2}{n}\sum_{k=1}^n{\P(Y\text{ jumps on }[t^n_{k-1},t^n_k))}\\
&=4\norm{\mu}^2T^2\P(Y\text{ jumps on }[0,t^n_1)),\\
\end{align*}
and since $t_1^n\to 0$ for $n\to\infty$, $\P(Y\text{ jumps on }[0,t^n_1))\to 0$.

For the second summand we have with It\^{o} and Cauchy-Schwartz
\begin{align*}
\E\Big[ \Big( \int_0^T\sigma^T(\hy_t-\hy_t^n)\d W_t\Big)^2  \Big]&=\E\Big[ \int_0^T(\sigma^T(\hy_t-\hy_t^n))^2\d t \Big]\\
&\leq \E\Big[ \int_0^T\norm{\sigma}^2\norm{\hy_t-\hy_t^n}^2\d t  \Big]\\
&\leq \norm{\sigma}^2\E\Big[ \sum_{k=1}^n{\Delta_t^n \sup_{s\in[t^n_{k-1},t^n_k)}\norm{\hy_t-\hy_t^n}^2} \Big]\\
&\leq \norm{\sigma}^2\Delta_t^nn\E\Big[  \sup_{s\in[0,T)}\norm{\hy_t-\hy_t^n}^2 \Big]\\
&= \norm{\sigma}^2T\E\Big[  \sup_{s\in[0,T)}\norm{\hy_t-\hy_t^n}^2 \Big].
\end{align*}
This expectation converges to $0$, because $\sup_{s\in[0,T)}\norm{\hy_s-\hy_s^n}^2$ converges to $0$ almost surely.
\end{proof}
Theorem \ref{thm_fb_returns_euler} shows that the discretized returns from an FB-HMM are consistent with the continuous-time returns in the sense that in the limit where $\Delta_t\to 0$ the Euler discretizations converge in $L^2$ to the continuous-time returns. This is particularly beneficial since the explicit Euler scheme is very easy to implement.\\
\begin{remark}
It is important to note that Theorem \ref{thm_fb_returns_euler} does not deal with consistency of the \textit{filters}. As we have mentioned before, the discretization $R^n$ depends on the discretized filter $\hy_t^n$, not on the discrete-time filter, so $R^n$ does not follow a truly discrete-time model.\\
The Euler discretization of the HMM is $\bar{R}^\hmm_k=\mu^TY_{t_{k-1}}\Delta_t+\sigma_0(W_{t_k}-W_{t_{k-1}})$. Comparing this to \eqref{equ_disc_fb_hmm} we see the discretization of the HMM indeed follows a discrete-time HMM.
\end{remark}
\begin{remark}[Conclusion]
Following Haussmann and Sass \cite{Sass2004/2} optimal trading strategies $\pi^*$ can be computed in the FB-HMM and would for logarithmic utility $U_0(x) = \log(x)$ be of the form that in Section \ref{sec_po} we just replace $\sigma_0$ by $\sigma^T\hat Y_t$ in $\pi_t^\hmm$ or equivalently $Y_t$ by $\hat Y_t$ in $\pi_t^\msm$. This yields in both cases
\[
\pi_t^* = (\sigma^T \hat Y_t)^{-2} \mu^T\hat Y_t  
\]
which again illustrates that the FB-HMM lies between HMM and MSM. An extension to power utility is possible. The consistency shown in Theorem \ref{thm_fb_returns_euler} can then be used to show that optimization results in the continuous-time FB-HMM are close to the results obtained in the discretized model (for fractions within [0,1]) providing the problem-specific discretization consistency for the FB-HMM. Thus, while having on one hand better econometric properties than the HMM by approximating the MSM, it has on the other hand better comoputational and consistency properties than the MSM.
\end{remark}

\section{Numerical Examples}\label{sec_numerics}

\subsubsection*{HMM and MSM Sample Paths}
In Figure \ref{fig_path_msm_hmm} we give an example of the returns of MSM and HMM, depending on the same Markov chain and Brownian motion. One can clearly see the effect of the jumping volatility in the MSM and the resulting volatility clustering.\\
\begin{figure} 
\centering 
\input{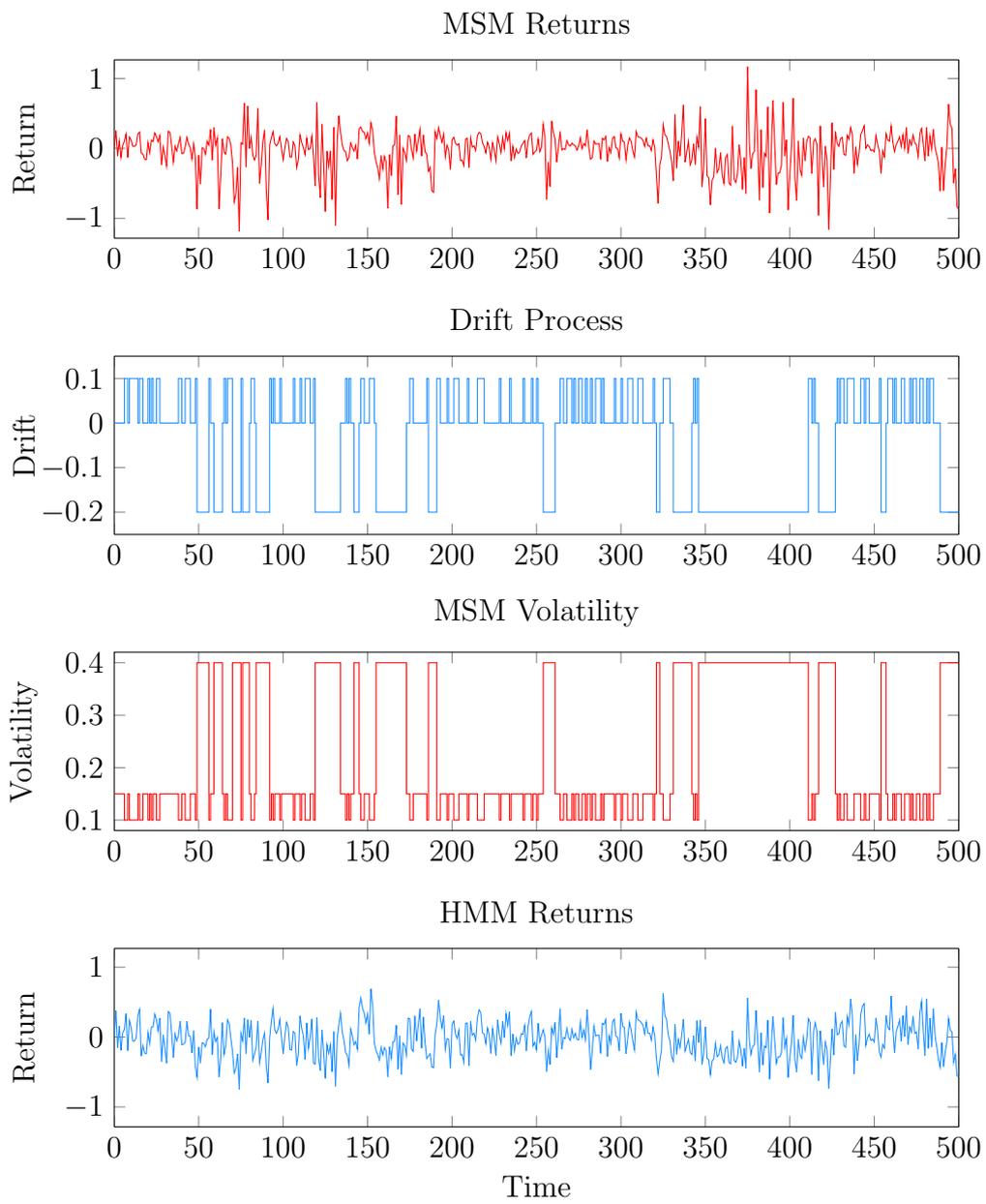} 
\caption{Example of an HMM and MSM return series depending on the same Markov chain} 
\label{fig_path_msm_hmm} 
\end{figure}
\subsubsection*{Filtering in HMM and MSM}
In Proposition \ref{prop_no_filter_MSM} we have proven that the HMM and MSM behave differently during the transition to continuous time. This difference can already be seen by considering discretizations with very fine grids. To illustrate the different behavior for increasing grid size we consider an HMM and MSM depending on the same $Y$ and $W$, discretized with $n=10{,}000$ grid points over $T=1$ year. The example paths are presented in Figure \ref{fig_msm_10000} for the MSM and Figure \ref{fig_hmm_10000} for the HMM.
\begin{figure}
\begin{center}
\includegraphics[width=12cm]{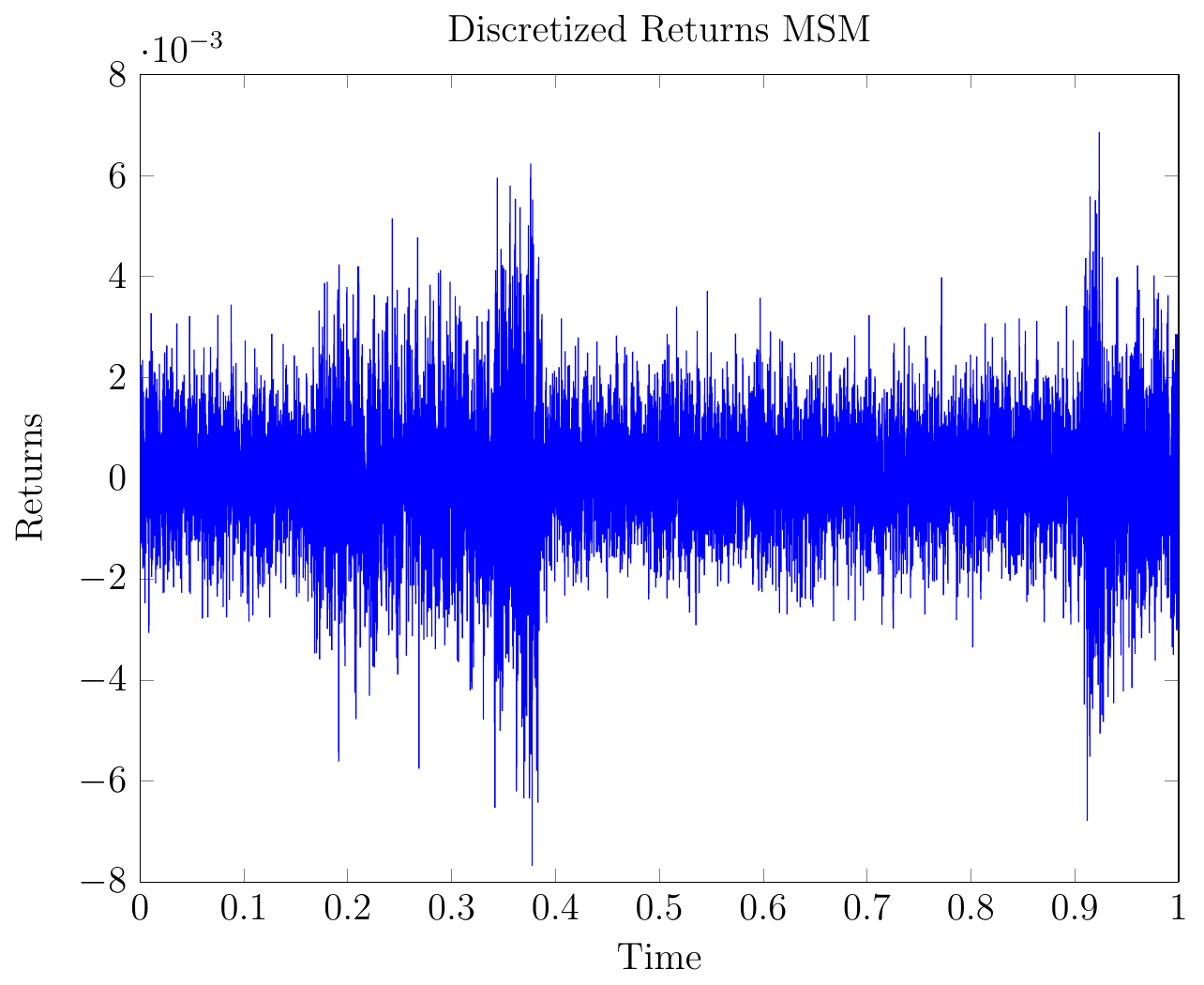} 
\includegraphics[width=12cm]{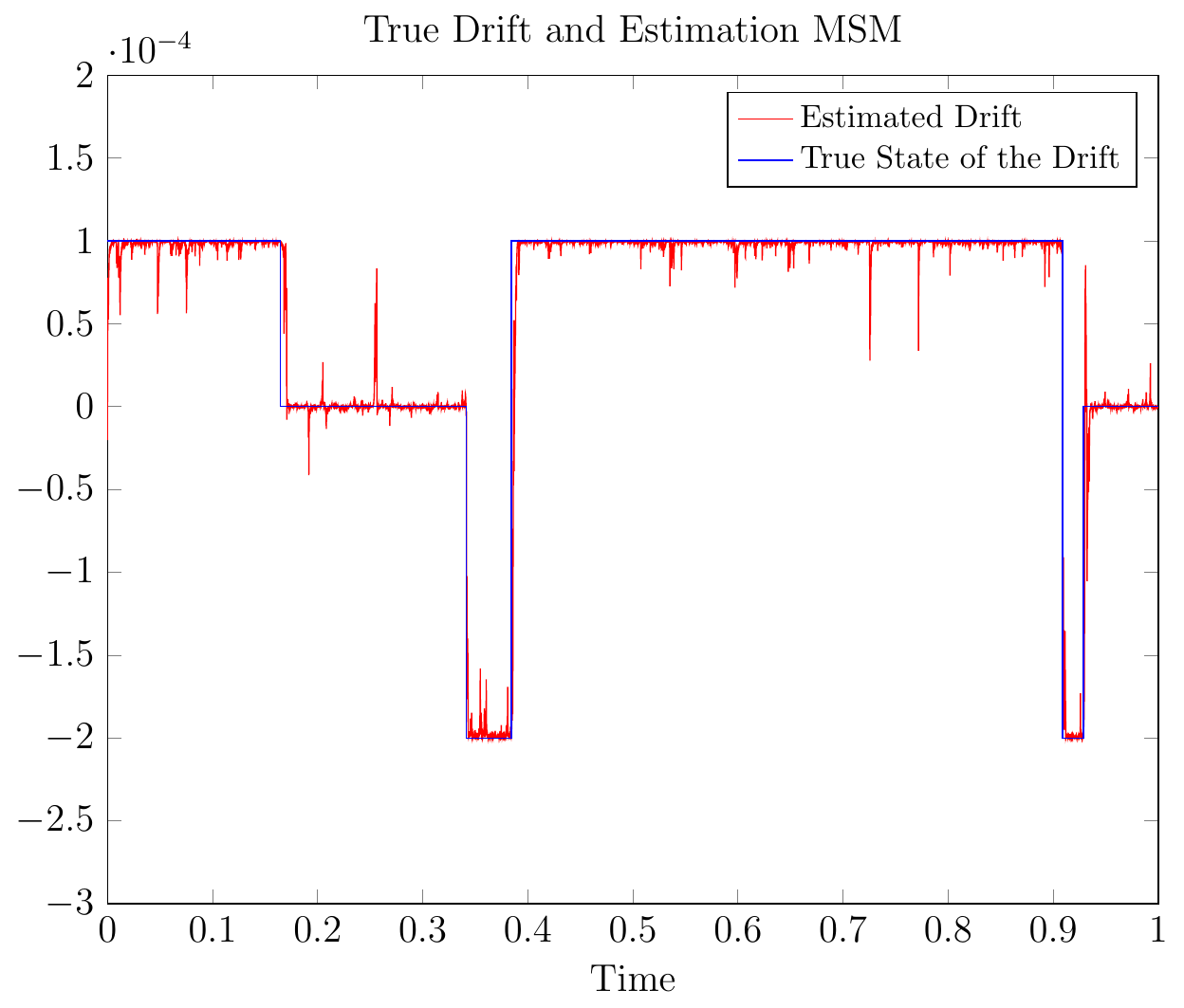} 
\caption{Example Path MSM with $10{,}000$ Grid Points} 
\label{fig_msm_10000} 
\end{center}
\end{figure}
\begin{figure}
\begin{center}
\includegraphics[width=12cm]{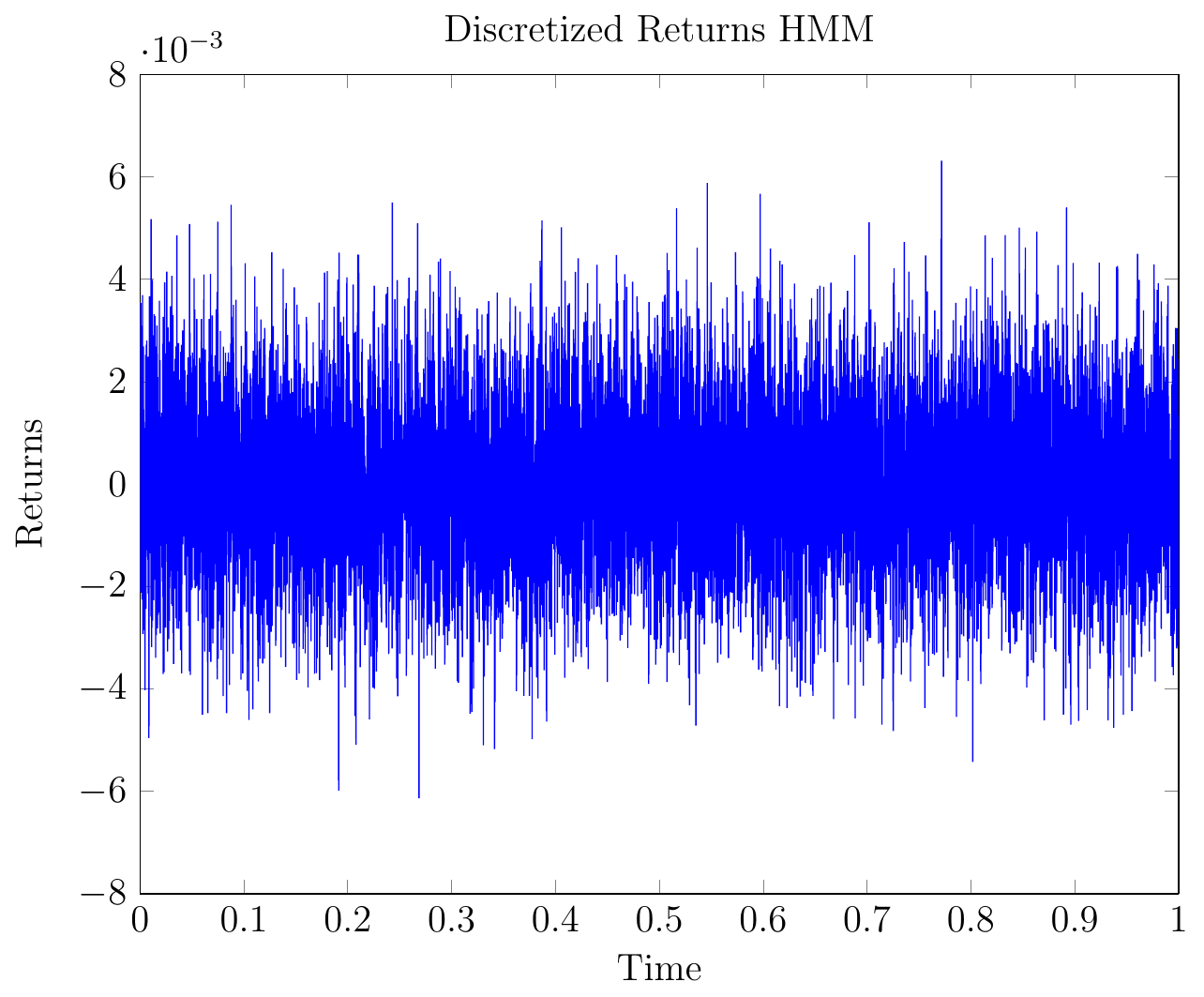} 
\includegraphics[width=12cm]{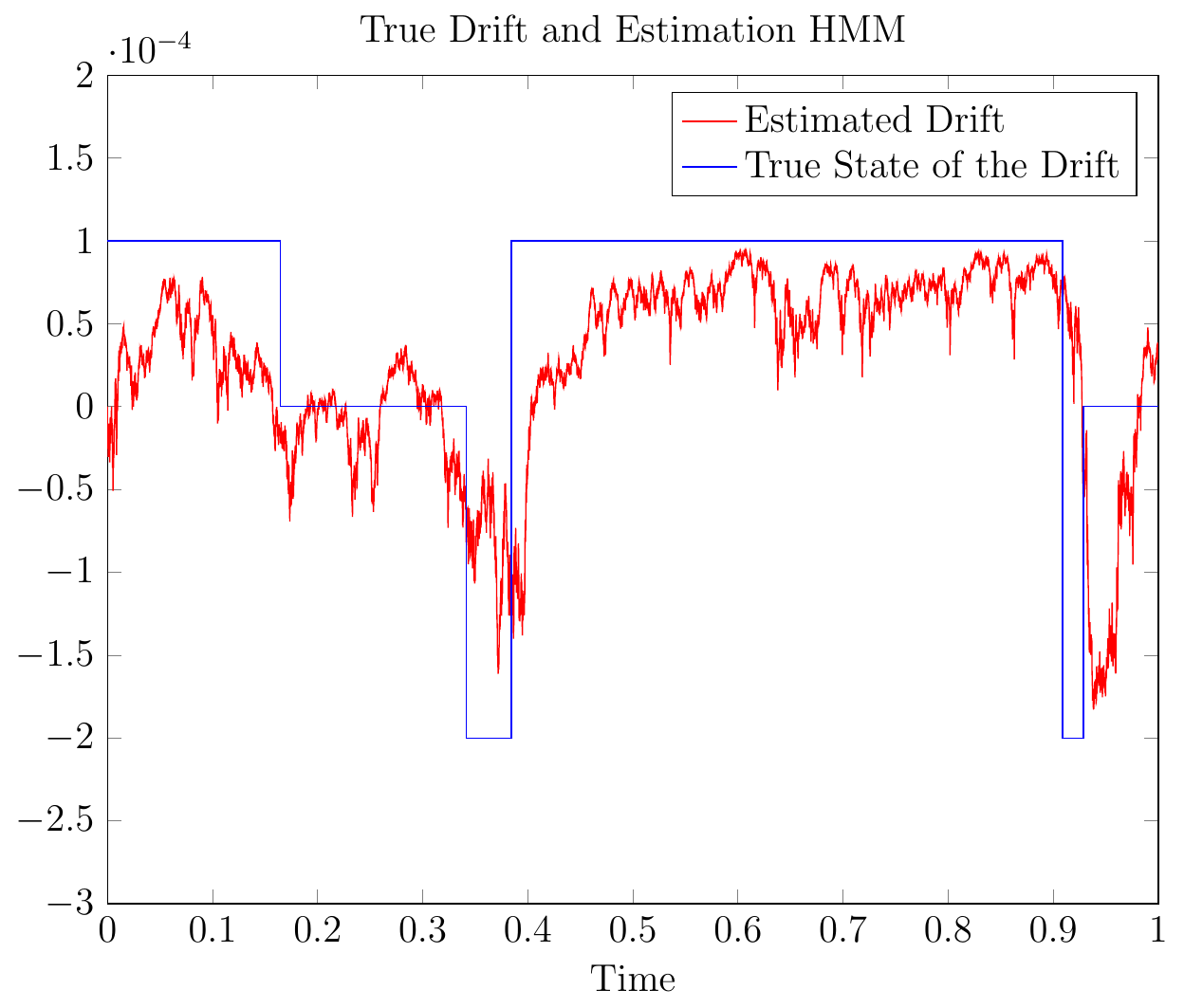} 
\caption{Example Path HMM with $10{,}000$ Grid Points} 
\label{fig_hmm_10000} 
\end{center}
\end{figure}
The difference even at this pathwise level is staggering. The performance of the MSM-filter improves significantly, while the HMM-filter does not. Looking at the returns we see intuitively why this is the case: With so many grid points we can strongly see the effect of the changing volatility in the MSM. This allows us to guess the state of the Markov chain quite well for a lot of grid points just by considering the amplitudes of the returns. Recalling the proof of Proposition \ref{prop_no_filter_MSM}, where we proved the observability of the signal in the continuous-time MSM via the quadratic variation, we can see the implications of this theoretical connection now even at the pathwise level. We can also observe the problems of trying to exploit this theoretical feature in practice: The chain can be estimated well only for an unrealistically large number of grid points. So this effect is not as relevant for e.g. daily data, which for $T=1$ leads to only 250 data points. But it is important and has a large impact for investigating the transition to continuous time.\\
Problems for the filter and the observer arise where e.g. quite extreme values for the normal distribution are simulated. These values imply a larger volatility than is really present and thus irritate the filter. But with a high probability these outliers do not persist over a longer time, since the Brownian increments are independent of each other. Thus over more time points the filter returns to a better estimate of the drift and does not really believe in a jump in the underlying chain.\\
In the HMM we do not have the help of a changing volatility. Looking at the returns and the performance of the filter we can see this clearly.\\

\subsubsection*{Filterbased HMM}
Due to the structure of the volatility in the FB-HMM it is necessary to compute a discretization of the filter $\hy_k$ for simulating discretized returns $R_k$. This is again different to the HMM or MSM, where the path of the filter is not necessary for simulating the returns.\\
We start with
\[
R_0=0,\qquad \rho_0=\nu, \qquad \ts_0=\sigma^T\nu
\]
and then simulate the next return increment as
\[
R_1=\mu^TY_{t_0}\Delta_t+\ts_0\sqrt{\Delta_t}w_1
\]
with $w_1\sim\mathcal{N}(0,1)$. We see that for simulating $R_2$ we need an estimate of $\hy_1=\frac{\rho_1}{1^T\rho_1}$ to update the volatility process to its current value $\ts_1=\sigma^T\hy_1$.\\
The numerical method is thus as follows: Initialize $R_0, \rho_0,\ts_0$ as above.\\
If we assume that we have simulated the returns and the filter up to time $k-1$ we now need a way to simulate $R_k$ and calculate the filter. We discretize the return at time $t_k$ as
\[
R_k=\mu^TY_{t_{k-1}}\Delta_t+\ts_{k-1}\sqrt{\Delta_t}w_k,
\]
where $\ts_{k-1}=\sigma^T\hy_{k-1}=\frac{\rho_{k-1}}{1^T\rho_{k-1}}$.\\
Making use of the recursive nature of the filtering equation we can update the filter in the usual recursive way: To use the robust discretization of the filtering equation introduced in \cite{James1996} we first set
\[
\tilde{\phi}_k^i\colonequals\exp(\ts^{-2}_{k-1}\mu_iR_k-\frac{1}{2}\ts^{-2}_{k-1}\mu_i^2\Delta_t)
\]
and
\[
\tilde{\psi}_k=\diag(\tilde{\phi}_k^1,\dots,\tilde{\phi}_k^d).
\]
Then the next value of the discretized filter is given by 
\[
\rho_k=\tilde{\psi}_k(I+\Delta_tQ^T)\rho_{k-1}
\]
and the volatility is updated as
\[
\ts_{k}=\sigma^T\frac{\rho_{k}}{1^T\rho_{k}}.
\]
With this numerical scheme we can simulate the discretized returns.\\
\begin{figure} 
\centering
\input{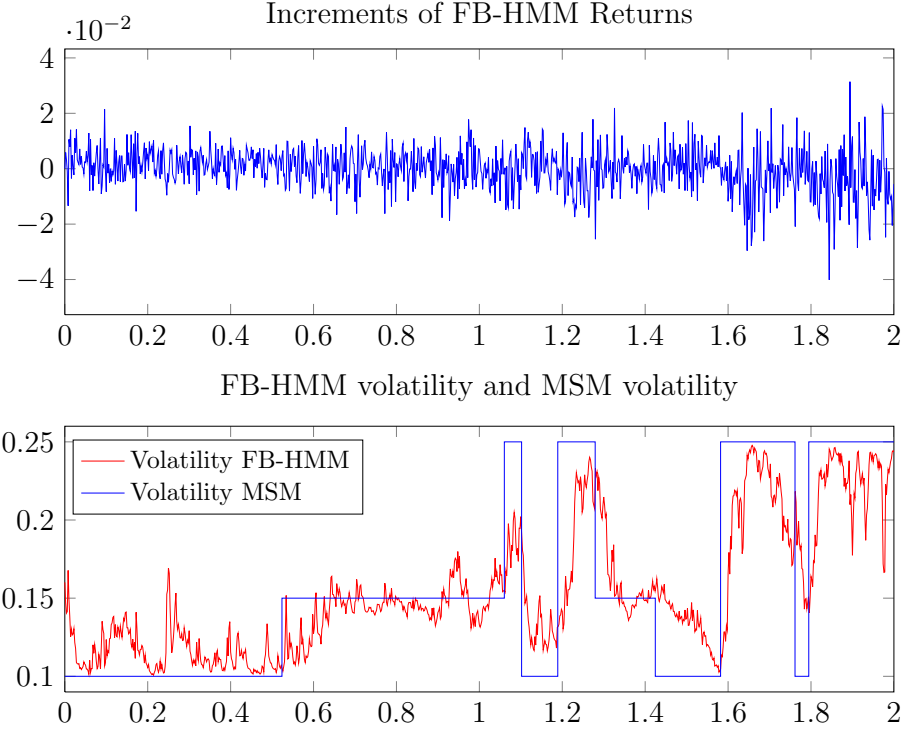} 
\caption{Return increments and volatility in the FB-HMM}
\label{fig_vola_sv} 
\end{figure}
\subsubsection*{Some Stylized Facts in the FB-HMM}
In Figure \ref{fig_vola_sv} we see a simulated path of the FB-HMM and the volatility processes both of the FB-HMM and the corresponding MSM.  The parameters were
\begin{align*}
Q=\begin{pmatrix}
-7 & 4 & 3\\
2 & -4 & 2\\
3 & 5 & -8
\end{pmatrix},\quad
\mu=\begin{pmatrix}
1\\
0\\
-2
\end{pmatrix}, \quad 
\sigma=\begin{pmatrix}
0.10\\
0.15\\
0.25
\end{pmatrix}.
\end{align*}
Both in the volatility process and the return increments it can be seen that the returns exhibit volatility clustering.\\
\begin{figure} 
%
%
\definecolor{mycolor1}{rgb}{0.00000,0.00000,0.56250}%
\begin{tikzpicture}[scale=0.8]

\begin{axis}[%
width=6.248in,
height=3.566in,
at={(0.8in,0.481in)},
scale only axis,
separate axis lines,
every outer x axis line/.append style={black},
every x tick label/.append style={font=\color{black}},
xmin=-0.025,
xmax=0.015,
every outer y axis line/.append style={black},
every y tick label/.append style={font=\color{black}},
ymin=0,
ymax=90,
xlabel={Bins},
ylabel={Observations per bin},
axis background/.style={fill=white}
]
\addplot[ybar,bar width=04,draw=myblue,fill=myblue,area legend] plot table[row sep=crcr] {%
-0.0246268274622722	2.50773048756063\\
-0.0238276953282679	0\\
-0.0230285631942635	0\\
-0.0222294310602592	1.25386524378032\\
-0.0214302989262548	0\\
-0.0206311667922505	1.25386524378032\\
-0.0198320346582462	2.50773048756063\\
-0.0190329025242418	2.50773048756063\\
-0.0182337703902375	1.25386524378032\\
-0.0174346382562332	2.50773048756063\\
-0.0166355061222288	5.01546097512127\\
-0.0158363739882245	3.76159573134095\\
-0.0150372418542202	5.01546097512127\\
-0.0142381097202158	3.76159573134095\\
-0.0134389775862115	6.26932621890159\\
-0.0126398454522071	8.77705670646222\\
-0.0118407133182028	7.5231914626819\\
-0.0110415811841985	12.5386524378032\\
-0.0102424490501941	18.8079786567048\\
-0.0094433169161898	26.3311701193867\\
-0.00864418478218546	17.5541134129244\\
-0.00784505264818113	28.8389006069473\\
-0.00704592051417679	30.0927658507276\\
-0.00624678838017246	43.8852835323111\\
-0.00544765624616812	43.8852835323111\\
-0.00464852411216378	41.3775530447505\\
-0.00384939197815945	53.9162054825536\\
-0.00305025984415511	58.9316664576749\\
-0.00225112771015077	55.170070726334\\
-0.00145199557614644	61.4393969452355\\
-0.000652863442142101	72.7241841392584\\
0.000146268691862234	67.7087231641371\\
0.00094540082586657	86.5167018208419\\
0.00174453295987091	70.2164536516978\\
0.00254366509387524	65.2009926765765\\
0.00334279722787958	65.2009926765765\\
0.00414192936188391	50.1546097512127\\
0.00494106149588825	52.6623402387733\\
0.00574019362989259	45.1391487760914\\
0.00653932576389692	32.6004963382882\\
0.00733845789790126	27.585035363167\\
0.00813759003190559	20.0618439004851\\
0.00893672216590993	17.5541134129244\\
0.00973585429991427	13.7925176815835\\
0.0105349864339186	5.01546097512127\\
0.0113341185679229	7.5231914626819\\
0.0121332507019273	2.50773048756063\\
0.0129323828359316	0\\
0.0137315149699359	1.25386524378032\\
0.0145306471039403	5.01546097512127\\
};
\addplot [color=black,solid,forget plot]
  table[row sep=crcr]{%
-0.025	0\\
0.015	0\\
};
\end{axis}
\end{tikzpicture}%
\caption{Histogram of SV increments} 
\label{fig_asy_sv} 
\end{figure}
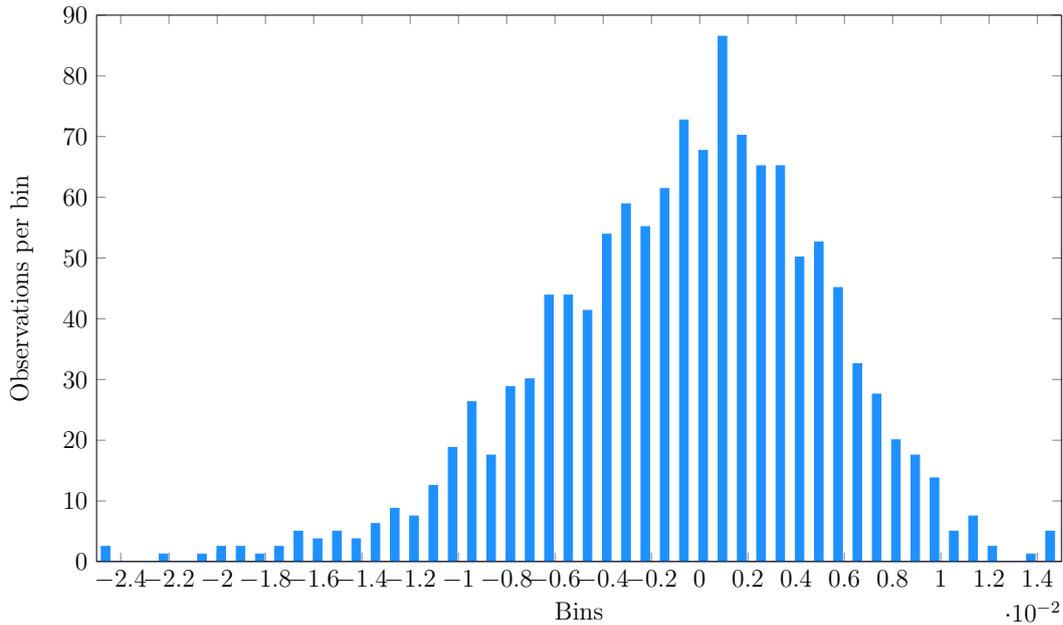
In Figure \ref{fig_asy_sv} we simulated increments of the FB-HMM over $T=4$ years with 250 discretization points each. We can clearly see that it is asymmetric. The left tail has more probability mass than the right tail. $Q$ was chosen as before and the parameters for drift and volatility were
\begin{align*}
\mu=\begin{pmatrix}
0.5\\
0\\
-1
\end{pmatrix}, \quad 
\sigma=\begin{pmatrix}
0.05\\
0.08\\
0.12
\end{pmatrix}.
\end{align*}
Choosing the largest volatility for the state with negative drift leads to this more pronounced left tail.

\bibliographystyle{alpha}
\bibliography{Ref_FB_vola}
\end{document}